\def\ps@pprintTitle{%
  \let\@oddhead\@empty
  \let\@evenhead\@empty
  \let\@oddfoot\@empty
  \let\@evenfoot\@oddfoot
}
\DeclareMathAlphabet{\mathpzc}{OT1}{pzc}{m}{it}
\def\eqd{\buildrel {\rm (d)} \over =}
\def\cw#1 { \overset{\mathbb{P}}{\underset{#1}{\longrightarrow}} }
\def\Real{\mathbb{R}}
\def\Natu0{\mathbb{N}_0}
\def\E#1{{\mathbb E}\left[#1\right]}
\def\Var#1{{\rm Var}\left(#1\right)}
\def \rcov#1#2 {{\rm cov}_{#1}\left( #2\right)}
\def\cov#1{{\rm  Cov}\left(#1\right)}
\newtheorem{example}{Example}
\newtheorem{lemma}{Lemma}
\newtheorem{theorem}{Theorem}
\newtheorem{corollary}{Corollary}
\newtheorem{remark}{Remark}
\newtheorem{proposition}{Proposition}
\newtheorem{model}{Model}
\newtheoremstyle{nopunct}
  {}{}              
  {\itshape}        
  {}                
  {\bfseries}       
  {}                
  { }               
  {}                
\theoremstyle{nopunct}
\numberwithin{equation}{section}
\begin{document}
	\begin{titlepage}	
		\thispagestyle{empty}
\title{State-Space Representations of Negative-Binomial INGARCH Models for Insurance Ratemaking}

\author{Jae Youn Ahn\fnref{thirdfoot}}
\author{\qquad Hong Beng Lim \fnref{fifthfoot}}
\author{\qquad Mario V.~W\"uthrich\fnref{fourthfoot}}

\fntext[thirdfoot]{Department of Statistics, Ewha Womans University, Seoul, Republic of Korea. Email: \url{jaeyahn@ewha.ac.kr}}
\fntext[fifthfoot]{Department of Finance, Chinese University of Hong Kong, Hong Kong, China. Email: \url{benhblim@cuhk.edu.hk}}
\fntext[fourthfoot]{Department of Mathematics, ETH Zurich, Switzerland.
Email: \url{mario.wuethrich@math.ethz.ch}}

\begin{abstract}

Integer-valued generalized autoregressive conditional heteroskedastic (INGARCH) models are a popular framework for modeling serial dependence in count time series, with the dynamics specified through a recursion for the conditional mean.
While convenient for modeling, prediction, and estimation, their standard formulation can be restrictive in insurance applications, where explanatory variables are essential for capturing policyholder heterogeneity and a suitable notion of stationarity is particularly useful for ratemaking, given the short time series typically available in insurance panels.
These requirements are difficult to reconcile in the classical INGARCH formulation because the conditional mean plays two roles simultaneously: it governs the temporal evolution of the process and determines the scale of the observations.
To address these issues, we propose a heterogeneous negative-binomial INGARCH(1,1) model for insurance ratemaking, in which the predictive mean is decomposed into a time-varying a priori rate and a baseline conditional mean.
The a priori rate accommodates time-varying heterogeneity, while the baseline conditional mean preserves the affine INGARCH recursion.
Since time-varying heterogeneity makes stationarity of the observed count process inappropriate, weak stationarity is instead formulated at the level of the latent state process through an equivalent state-space model representation.
The practical implementation of the model is illustrated using panel insurance data, highlighting its applicability to predictive analysis in property and casualty insurance ratemaking.

\end{abstract}

\maketitle

\textbf{Keywords:} Count time-series model,  INGARCH,  state-space model, observation-driven state-space model, B\"uhlmann credibility.

\vspace{.2cm}

JEL Classification: C3, C5

\vspace{.2cm}

Version of \today
\end{titlepage}

\newpage

\section{Introduction}

\subsection{INGARCH model}

Claims counts modeling is an important topic in insurance pricing and risk management. 
While random-effects models \citep{nelder1997credibility} are central for capturing policyholder-level variation across time and deriving fair premiums, it has been increasingly recognized that accounting for temporal dependence is equally essential for capturing the dynamic structure of claims histories \citep{pinquet2001allowance, bolance2003time, gourieroux2004heterogeneous}.
Alongside the integer-valued autoregressive (INAR) model \citep{al1987first}, 
the \emph{integer-valued generalized autoregressive conditional heteroskedastic} (INGARCH) model 
has emerged as a popular framework for representing such temporal dependence in counts processes.
Introduced by \citet{ferland2006integer} and extended by \citet{fokianos2009poisson}, it combines an autoregressive recursion for the conditional mean with a specified conditional distribution, often Poisson or
negative-binomial (NB) for claims counts.
In particular, the INGARCH models were inspired by analogous specifications for continuous-valued data, including the generalized autoregressive conditional heteroskedastic (GARCH) model for Gaussian observations \citep{bollerslev1986generalized}, the autoregressive conditional duration (ACD) model for Exponential or Gamma-distributed durations \citep{engle1998autoregressive}, and related Gamma autoregressive models \citep{jasiak1999persistence}.
These parallels indicate that the INGARCH approach belongs to a broader class of conditional mean models applicable across different data types.

In the INGARCH setting, the one-step-ahead conditional mean is a linear combination of a constant term, the latest observation(s), and the preceding conditional mean(s), allowing the model to capture a wide range of temporal patterns in counts data.
Its parameters can be estimated directly via the conditional log-likelihood, avoiding the need for simulation-based procedures.
This combination of modeling flexibility and computational simplicity has led to applications in various areas, including climate science \citep{jahn2023approximately}, actuarial science \citep{reboredo2023forecasting}, epidemiology \citep{fokianos2010interventions}, and many other domains where count data exhibits dynamic dependence.
Well-known special cases are the Poisson-INGARCH model \citep{ferland2006integer} and its NB extensions \citep{zhu2011negative, xu2012model}. 

When applying INGARCH models in insurance practice, several issues can arise.
First, bringing exogenous covariates into an INGARCH specification can be delicate, particularly due to the requirement that the conditional mean remain strictly positive. 
Second, INGARCH models are often motivated through long-run stability or stationarity arguments, whereas insurance applications are typically characterized by short finite histories, often no more than 10 observation periods. This motivates a formulation which by design satisfies a suitable notion of stationarity.

\subsection{State-space models}\label{sec.ssm}

\emph{State-space models} (SSMs) are a broad class of dynamic models in which
an evolving latent state is used to describe the observed outcome process
\citep{durbin2012time}. In insurance ratemaking, the latent state can be
interpreted as an unobserved risk level, while the observed process corresponds
to the claim counts. Following \citet{cox1981statistical}, SSMs are often
classified as parameter-driven or observation-driven, depending on whether the
transition law of the latent state is independent of, or explicitly depends on,
the observation history conditional on the previous latent state. Although we do
not rely on a sharp distinction between these two classes, the models considered
in this paper are observation-driven in this broad sense, because the evolution
of the latent state may depend on both past observations and the previous latent
state. 

A state-space formulation separates the model into two components. The
observation equation specifies the conditional distribution of the observed
count given the current latent state, while the evolution equation describes how
the latent state is updated over time. This separation provides a natural way to
distinguish the time-varying scale of the observations from the temporal
evolution of the latent risk level. In particular, explanatory variables can be
incorporated through the observation layer, whereas serial dependence can be
modeled through the latent-state dynamics. Consequently, if an INGARCH model
admits an equivalent state-space representation, the practical difficulties
discussed above can be addressed within a unified modeling framework.

\subsection{From the INGARCH model to the SSM}

To connect the INGARCH model with a state-space formulation, we use the
notion of a \emph{marginalized state-space model} (M-SSM)\footnote{See
Model~\ref{mod.2} in Section~\ref{sec.4} for a formal definition of M-SSM.}
\citep{brockwell2002introduction}. As in a standard SSM, an M-SSM is
specified through an observation equation and an evolution equation. The key
difference is that the evolution equation specifies only the conditional
marginal distribution of the next latent state, rather than the full joint
evolution of successive latent states. 

Using this framework, we show that the NB-INGARCH model admits an equivalent
M-SSM representation with a Poisson observation equation and a Gamma marginal
evolution equation for the latent state. This equivalence allows us to address
two practical issues that arise when INGARCH models are used for insurance
ratemaking.

First, time-varying covariates can be incorporated naturally through the
a priori rate \(\lambda_t\) in the observation layer of the M-SSM. Through the
equivalence between the two representations, this covariate structure is then
inherited by the corresponding NB-INGARCH model.

Second, stationarity under time-varying heterogeneity can be formulated at the
level of the latent states, after separating out the time-varying scale of the
observations. For this purpose, we introduce a particular M-SSM specification
in which the latent states have constant mean and variance over time. However,
because the M-SSM specifies only conditional marginal distributions, it does not
identify the conditional joint distribution of successive latent states, and
therefore does not determine the covariance structure of the latent process.
Consequently, weak stationarity cannot be fully established within the M-SSM
formulation alone.

To resolve this issue, we show that the M-SSM corresponding to the NB-INGARCH
model can be extended, in a compatible way, to an SSM in which the joint
evolution of successive latent states is specified.
Although such an extension need not be unique, we show that, once equipped with the additional property that the conditional mean of the latent state evolves linearly---analogously to the NB-INGARCH model---the covariance structure of the
latent process is uniquely determined for all compatible SSM representations.
In particular, under a further restriction on the recursion parameters, the latent state
process is weakly stationary. Thus, the proposed M-SSM, and equivalently the
NB-INGARCH model, incorporates time-varying heterogeneity while
admitting a weakly stationary latent-state interpretation.

The remainder of the paper is organized as follows.
Section~\ref{sec.2} introduces the INGARCH model and its limitations in insurance ratemaking.
Section~\ref{sec.4}, which contains the main contribution of this paper, develops the NB-INGARCH model with time-varying a priori rates and introduces its corresponding M-SSM formulation.
Using this alternative representation, we show how covariates can be incorporated systematically, while stationarity is formulated at the level of the latent states.
Section~\ref{sec.5} provides the theoretical support for the weak-stationarity interpretation used in Section~\ref{sec.4}.
This section is rather technical and may be skipped by readers mainly interested in the modeling framework and its practical implementation. 
Section~\ref{sec.6} illustrates the practical implementation of the model based on a real data analysis, and the final section concludes the paper.

\section{The INGARCH model}
\label{sec.2}

The main class of models we consider is the INGARCH model. We first introduce this class of models, with a particular focus on the negative-binomial INGARCH (NB-INGARCH) model. We then discuss their limitations in incorporating covariates. 

\subsection{The INGARCH model and the NB-INGARCH}

We begin by introducing the notation that is used throughout this paper. We generically use the notation
$Z_{\le t}=(Z_s)_{s \le t}$, being all components of the doubly-infinite sequence
$(Z_s)_{s\in\mathbb{Z}}$ before time $t$, moreover, we write $Z_{1:t}=(Z_1, \ldots, Z_t)$. 
We use the following parametric families:
\begin{itemize}
\item $\mathrm{Pois}(\lambda)$: Poisson distribution with mean $\lambda\geq 0$. We adopt the convention that $\mathrm{Pois}(0)$ denotes the degenerate distribution at $0$, that is, if $X\sim \mathrm{Pois}(0)$, then $X=0$ almost surely.
  \item $\mathrm{Gamma}(\alpha,\beta)$: Gamma distribution with shape $\alpha>0$ and rate $\beta>0$ parameters, i.e., having mean ${\alpha}/{\beta}$ and variance ${\alpha}/{\beta^2}$.
  \item $\mathrm{NB}(\kappa,\pi)$: Negative-binomial distribution with $\kappa>0$, $\pi\in[0,1)$, mean ${\kappa\pi}/{(1-\pi)}$ and variance ${\kappa\pi}/{(1-\pi)^2}$.
        We adopt the convention that $\mathrm{NB}(\kappa,0)$ denotes the degenerate distribution at $0$, that is, if $X\sim \mathrm{NB}(\kappa,0)$, then $X=0$ almost surely.
       
\end{itemize}

For brevity of notation, we begin by considering the case of a single time series: the case of panel data, i.e., cross-sectional time series will be considered later in Section \ref{sec.6}. Thus, our main objective is to model the integer-valued time series $(Z_t)_{t\in\mathbb{Z}}$, where $Z_t \in \mathbb{N}_0$.

The INGARCH$(p,q)$ model was formalized by \citet{fokianos2009poisson}, initially with a Poisson distribution for the counting response. In this model, the conditional mean evolves as a linear combination of a constant term, the \(p\) most recent observations, and the \(q\) most recent conditional means. In this paper, we restrict attention to the first-order INGARCH$(1,1)$ specification, in which only the latest observation and the immediately preceding conditional mean enter the recursion. For brevity, we omit the order notation and refer to the INGARCH$(1,1)$ model simply as the INGARCH model.
More precisely, let \(\beta_t^{[0]} \geq 0\), and let \(\beta_t^{[1]}, \beta_t^{[2]} >0\). Under this first-order Poisson specification, it is assumed that
\[
Z_{t+1}\mid Z_{\leq t} \sim \mathrm{Pois}(M_{t+1\mid t}),
\]
where the conditional mean \(M_{t+1\mid t}\) satisfies the recursion
\[
M_{t+1\mid t}
=
\beta_t^{[0]}+\beta_t^{[1]}Z_t+\beta_t^{[2]}M_{t\mid t-1}.
\]
Equidispersion under the Poisson model means that the above recursion also defines a recursion relation for the variance, thus, connecting this model to the standard GARCH(1,1) model where the conditional variance evolves as a linear combination of past lags of the observed data and the conditional variances.

However, this same equidispersion property makes the Poisson-INGARCH model ill-suited to insurance data, which often exhibit overdispersion. Therefore, the primary focus of this paper is on the NB-INGARCH model of \citet{zhu2011negative} and \citet{xu2012model}, which accommodates overdispersion. Below, we provide a generalized formulation of this model allowing for time-varying coefficients.

\begin{model}[NB-INGARCH model]\label{mod.1}
Given the exogenous sequences
\begin{equation}\label{eq.q1}
(\beta_t^{[0]})_{t\in\mathbb{Z}}\subseteq [0,\infty), 
\qquad 
(\beta_t^{[1]})_{t\in\mathbb{Z}}\subseteq (0,\infty),
\qquad 
(\beta_t^{[2]})_{t\in\mathbb{Z}}\subseteq (0,\infty),
\end{equation}
we define the NB-INGARCH model for $(Z_t)_{t\in\mathbb{Z}}$ by the following specifications:
\begin{itemize}
  \item[i.] The conditional mean
  \[
  M_{t+1\mid t}:=\E{Z_{t+1}\mid Z_{\le t}}
  \]
  satisfies the recursion
  \begin{equation}\label{eq.nb1}
  M_{t+1\mid t}
  =
  \beta_t^{[0]}+\beta_t^{[1]}Z_t+\beta_t^{[2]}M_{t\mid t-1}.
  \end{equation}

  \item[ii.] Conditionally on \(Z_{\le t}\),
  \[
  Z_{t+1}\mid Z_{\le t}\sim \operatorname{NB}\bigl(\kappa_{t+1\mid t},\pi_{t+1\mid t}\bigr),
  \]
  with both of $\kappa_{t+1\mid t}>0$ and $\pi_{t+1\mid t}\in(0,1)$ being known functions of \(Z_{\le t}\) as well as of the exogenous sequences,
  with this dependence suppressed in the notation.

\end{itemize}
\end{model}

Note that the NB-INGARCH$(1,1)$ specification above is not uniquely determined by the mean recursion in \eqref{eq.nb1}. 
Indeed, the pair $(\kappa_{t+1\mid t},\pi_{t+1\mid t})$ is only constrained through the identity
\[
\frac{\kappa_{t+1\mid t}\,\pi_{t+1\mid t}}{1-\pi_{t+1\mid t}} \;=\; M_{t+1\mid t},
\]
satisfying \eqref{eq.nb1}, 
so many choices of $(\kappa_{t+1\mid t},\pi_{t+1\mid t})$ can generate the same conditional mean.
To obtain a fully specified NB-INGARCH$(1,1)$ model, one must therefore either fix one of the two parameters by imposing a structure on $\kappa_{t+1\mid t}$ or $\pi_{t+1\mid t}$, or introduce an additional restriction that links $\kappa_{t+1\mid t}$ and $\pi_{t+1\mid t}$. We refer to \citet{zhu2011negative} and \citet{xu2012model} for examples of such restrictions on $\kappa_{t+1\mid t}$ or $\pi_{t+1\mid t}$.

\subsection{Limitations of the INGARCH model as a ratemaking tool in property and casualty insurance}\label{sec.2.2}

In insurance applications, count time series arise naturally, for example, in property and casualty insurance. 
A key appeal of INGARCH models in this setting is that the dynamics are specified directly through a recursion for the conditional mean, which leads to a tractable likelihood function in an autoregressive form and a straightforward estimation procedure. 
Despite these advantages, however, INGARCH models have received relatively limited attention in the actuarial literature \citep{afazali2026modeling}. 
The limitations of INGARCH models discussed below are particularly relevant for property and casualty insurance ratemaking, and we examine them through that practical lens.

The first issue relates to incorporating covariates. The modeler (actuary) may wish to adjust the conditional mean with respect to policyholder-specific and time-varying covariates. 
Adding covariates as additional linear terms to the recursion \eqref{eq.nb1} requires that the additional terms remain strictly non-negative, which typically requires the same condition to be imposed on both the coefficients and the regressors \citep{chen2017bayesian, chen2019markov}. Such constraints may hinder the use of certain regressors, or preclude negative relationships between the conditional mean and the covariates. Indeed, one may sidestep such constraints by instead using a log-link, but the dynamics then operate on the log-means; thus, the conditional means are no longer an affine function of past counts and the stability conditions require a different interpretation \citep{fokianos2011log}. The affine form also allows a credibility-esque interpretation, which is lost under the log-link choice. Nevertheless, given that log-linear models are common in actuarial practice, a model that incorporates covariates in a log-linear manner while preserving the affine form of the INGARCH model is desirable.

The second issue relates to the modeling of stationarity. Stationarity serves as a useful and interpretable relaxation of the random-effects assumption. However, in the presence of time-varying covariates, the usual notion of stationarity for the observed process is no longer natural, since the distribution of the observations is allowed to vary systematically over time through these covariates. In such a setting, stationarity is more appropriately formulated at the level of an underlying latent state process after separating the time-varying scale of the observations from the temporal dynamics. This is particularly relevant in insurance applications, where policyholder characteristics, contract features, and exposure-related quantities often change over time.

Many of these issues arise from the fact that the conditional mean in the INGARCH model is trying to satisfy two roles at once: both exhibiting an \emph{evolution} over time, as well as determining the \emph{scale} of the observations. One way to avoid such issues is to disentangle the \emph{level} of the conditional mean  from the \emph{evolution} of the underlying state, specifically by augmenting the model with a priori rates $\lambda_t$. 
By allowing
\[
\E{Z_t\mid Z_{\le t-1}}=\lambda_t M_{t\mid t-1},
\]
we separate the overall scale of the observations, captured by the a priori rates \(\lambda_t\), from the temporal dynamics, captured by the evolving process \((M_{t\mid t-1})_{t\ge1}\). 
In this way, \(M_{t\mid t-1}\) no longer needs to coincide directly with the conditional mean of \(Z_t\). Rather, it can be interpreted as the predictive mean of an underlying latent risk state that governs the temporal evolution. To this end, we formally introduce state-space models (SSMs) \citep{durbin2012time}, in which a process of the latent states is specified alongside the process of observations. This is the topic of the next section.

\section{Practical considerations for insurance ratemaking using the NB-INGARCH model: heterogeneity and variance stationarity}\label{sec.4}

This section addresses the issues discussed in Section~\ref{sec.2.2} that arise when applying the NB-INGARCH model in Model~\ref{mod.1} to insurance ratemaking. 
We begin with the first issue, namely the incorporation of time-varying covariates. To this end, we introduce an extension of the NB-INGARCH model that accommodates time-varying heterogeneity and that is naturally formulated on the positive time index set \(t\in\mathbb{N}\), as is appropriate for ratemaking in property and casualty insurance. 
We refer to this extension as the \emph{heterogeneous NB-INGARCH model}. In particular, this formulation also allows for the degenerate case in which the negative-binomial mean is zero, so that the corresponding observation is almost surely equal to zero.

\begin{model}[Heterogeneous NB-INGARCH model]\label{mod.3}
Given the exogenous sequences
\begin{equation}\label{eq.q1}
(\beta_t^{[0]})_{t\in\mathbb{Z}}\subseteq [0,\infty), 
\qquad 
(\beta_t^{[1]})_{t\in\mathbb{Z}}\subseteq (0,\infty),
\qquad 
(\beta_t^{[2]})_{t\in\mathbb{Z}}\subseteq (0,\infty),
\qquad 
 (\lambda_t)_{t\in\mathbb{Z}}\subseteq [0,\infty),
\end{equation}
we define the heterogeneous NB-INGARCH model for $(Z_t)_{t\in\mathbb{Z}}$ through the following specifications:
\begin{itemize}
  \item[i.] \textbf{Evolution equation:} The sequence of the baseline conditional mean $(M_{t+1\mid t})_{t\in\mathbb{Z}}$ satisfies the recursion in \eqref{eq.nb1}.

  \item[ii.] \textbf{Observation equation:}
  \(
  Z_{t+1}\mid Z_{\le t}\sim \operatorname{NB}\bigl(\kappa_{t+1\mid t},\pi_{t+1\mid t}\bigr)
  \)
  with both $\kappa_{t+1\mid t}>0$ and $\pi_{t+1|t}\in(0,1)$ 
  being known functions of \(Z_{\le t}\) as well as of the exogenous sequences (with this dependence suppressed in the notation), such that
  \[
  \E{Z_{t+1}\mid Z_{\le t}}=\lambda_{t+1}M_{t+1\mid t}.
  \]
  \item [iii.] \textbf{Anchoring at time $t=1$:} Assume
  \begin{align*}
      Z_1 \sim \operatorname{NB}(\kappa_{1|0}, \pi_{1|0}),
  \end{align*}
  with $\kappa_{1|0}>0$ and $\pi_{1|0}\in(0,1)$ being functions of exogenous sequences satisfying
  \[
  \E{Z_1}=\lambda_1 M_{1|0},
  \]
  with $M_{1|0}:=1$.\footnote{The baseline conditional mean at time $t=1$, $M_{1|0}$ is redundant given the presence of the a priori rate $\lambda_1$, leading to identifiability issues if both are estimated from the data, hence the need for the additional restriction $M_{1|0}=1$.}
\end{itemize}
\end{model}

As in Model \ref{mod.1}, 
the heterogeneous NB-INGARCH model above is not uniquely determined by the mean recursion in \eqref{eq.nb1}. 
While this formulation successfully incorporates time-varying heterogeneity, fully specifying the NB-INGARCH model requires an additional restriction on $\kappa_{t+1\mid t}$ and $\pi_{t+1\mid t}$. 

Among various choices of such specifications, this section aims to explore those which accommodate weak stationarity even under the presence of time-varying heterogeneity. Indeed, as briefly mentioned in Section \ref{sec.2.2}, this heterogeneity is incompatible with the weak stationarity of the observations $(Z_t)_{t\ge 1}$. Nevertheless, by identifying the SSM equivalent to the heterogeneous NB-INGARCH model, one can instead discuss stationarity in terms of the latent states, whose scale is invariant to the time-varying heterogeneity. This is our plan in the remainder of this section, with the next subsection beginning with a review of the notion of an M-SSM.

\subsection{Review of the marginalized state-space model}

Compared to the SSM later introduced as Model \ref{mod.100} in Section \ref{sec.5}, the time evolution in the M-SSM\footnote{This framework is referred to as an \emph{observation-driven state-space model} in the statistical literature \citep{brockwell2002introduction}. We use the term \emph{marginalized state-space model} to emphasize that only the conditional marginal distributions of the latent states are specified, without an explicit description of their joint evolution.} takes the simple structure defined by only the conditional marginal distribution of the latent state at each time point given past observations \citep{brockwell2002introduction}. 
In this sense, an M-SSM can be viewed
as a partially specified SSM in which the full transition law is replaced by
its conditional marginal counterpart.

In the following, we present the M-SSM in its most general form. 
Concrete examples of such models in actuarial applications can be found in the recent literature \citep{ahn2025observation, ahn2025observation2}.
We generically use $f$ for the (conditional) distribution (and/or the density) of the considered random variables.

\begin{model}[M-SSM]
\label{mod.2}
An M-SSM for
$(Z_t)_{t\in\mathbb{Z}}$ and $(\Theta_t)_{t\in\mathbb{Z}}$ is defined by:

\begin{itemize}
  \item[i.] An observation equation: 
  \[
  f(Z_t\mid \Theta_t),\qquad t\in\mathbb{Z}, 
  \] 
  together with the Markov assumption
  \begin{equation}\label{eq.e1}
    f(Z_t\mid Z_{\le t-1},\Theta_{\le t})
    =
    f(Z_t\mid \Theta_t),
    \qquad t\in\mathbb{Z};
  \end{equation}

  \item[ii.] A marginal evolution equation:
  \begin{equation}\label{eq.e13}
    f(\Theta_{t+1}\mid Z_{\le t}),
    \qquad t\in\mathbb{Z}.
  \end{equation}
\end{itemize}
\end{model}

 Similar to the INGARCH model, we may equip Model \ref{mod.2} with the additional linear evolution property 
\begin{equation}\label{eq.e231}
\E{\Theta_{t+1}\mid  Z_{\le t}}=\Delta_t\,\E{\Theta_{t}\mid  Z_{\le t}}+c_t,
\end{equation}
for some constants $\Delta_t\in\Real$ and $ c_t\in\Real$. When so equipped, we may refer to  the model as a \textbf{marginalized linear state-space model} (M-LSSM).

Given the sparse specification of Model~\ref{mod.2}---which relies only on the observation equation and the marginal evolution equation, without specifying the conditional joint distribution of the latent states---one may first ask whether the corresponding filtering recursion and predictive densities are well defined. Lemma \ref{cor.e2} shows that they are indeed well defined. However, as discussed in Remark~\ref{rem.1}, this limited specification comes at the cost of not being able to identify the covariance structure of the latent process.

\begin{lemma}\label{cor.e2}
Under Model \ref{mod.2}, we have the following recursive identities:
\begin{itemize}
  \item[i.] The filtering update is given by
  \begin{equation}\label{eqq.2}
  f(\Theta_{t+1}\mid Z_{\le t+1})
  \;\propto\;
  f(Z_{t+1}\mid \Theta_{t+1})\,f(\Theta_{t+1}\mid Z_{\le t}).
  \end{equation}

  \item[ii.] Conditional on  $Z_{\le 0}$, the density of
  $Z_{1:t}$ can be written as
  \[
  \begin{aligned}
  f(Z_{1:t}\mid Z_{\le 0})
  &= \prod_{j=1}^{t} f(Z_j\mid Z_{\le j-1}) \\
  &= \prod_{j=1}^{t}\int
  f(Z_j\mid \theta_j)\,f(\theta_j \mid Z_{\le j-1})\,
  {\rm d}\nu(\theta_j),
  \end{aligned}
  \]
  where $\nu$ is the $\sigma$-finite measure with respect to which the
  conditional density $f(\theta_j\mid Z_{\le j-1})$ is defined.
\end{itemize}
\end{lemma}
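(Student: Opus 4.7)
The plan is to handle the two items in sequence, with part~(i) providing the algebraic identity that immediately feeds into the telescoping argument in part~(ii).

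For part~(i), I would begin with Bayes' rule applied to $\pi(\Theta_{t+1}\mid Z_{\le t+1})=\pi(\Theta_{t+1}\mid Z_{t+1},Z_{\le t})$, giving
\[
\pi(\Theta_{t+1}\mid Z_{\le t+1})
\;\propto\; f(Z_{t+1}\mid \Theta_{t+1},Z_{\le t})\,\pi(\Theta_{t+1}\mid Z_{\le t}),
\]
where the proportionality hides the normalizing factor $f(Z_{t+1}\mid Z_{\le t})$, which does not depend on $\Theta_{t+1}$. The second factor on the right is supplied directly by the marginal evolution equation~\eqref{eq.e13}, so the work reduces to showing $f(Z_{t+1}\mid\Theta_{t+1},Z_{\le t})=f(Z_{t+1}\mid\Theta_{t+1})$. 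To extract this, I would integrate $\Theta_{\le t}$ out of the joint conditional density: write
\[
f(Z_{t+1}\mid\Theta_{t+1},Z_{\le t})
=\int f(Z_{t+1}\mid\Theta_{\le t+1},Z_{\le t})\,\pi(\Theta_{\le t}\mid\Theta_{t+1},Z_{\le t})\,{\rm d}\nu(\Theta_{\le t}),
\]
and then apply the M-SSM observation assumption~\eqref{eq.e1} (at time index $t+1$) to collapse the inner density to $f(Z_{t+1}\mid\Theta_{t+1})$, which can then be pulled outside the integral so that the remaining integrand integrates to one.

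For part~(ii), I would iterate the elementary chain-rule factorization
\[
f(Z_{\le t})=f(Z_{\le 0})\prod_{j=1}^{t}f(Z_j\mid Z_{\le j-1}),
\]
and then rewrite each conditional factor by conditioning on the latent state at time $j$:
\[
f(Z_j\mid Z_{\le j-1})=\int f(Z_j\mid\theta_j,Z_{\le j-1})\,\pi(\theta_j\mid Z_{\le j-1})\,{\rm d}\nu(\theta_j).
\]
The same conditional-independence reduction used in part~(i) replaces $f(Z_j\mid\theta_j,Z_{\le j-1})$ by $f(Z_j\mid\theta_j)$, yielding exactly the stated expression. The factor $\pi(\theta_j\mid Z_{\le j-1})$ is the primitive specified by~\eqref{eq.e13}.

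The main conceptual obstacle is step~(i)'s reduction, because in the M-SSM framework we are \emph{not} given a joint law of $(Z_t,\Theta_t)_{t\in\mathbb{Z}}$ as primitive data; only the conditional marginals $\pi(Z_t\mid\Theta_t)$ and $\pi(\Theta_{t+1}\mid Z_{\le t})$ together with assumption~\eqref{eq.e1} are. One has to be careful that the object $f(Z_{t+1}\mid\Theta_{t+1},Z_{\le t})$ is well-defined and admits the marginalization above; this is where~\eqref{eq.e1} is doing real work, since without it the past states $\Theta_{\le t}$ could leave a residual effect on $Z_{t+1}$ even after conditioning on $\Theta_{t+1}$. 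Once this point is settled, both identities are essentially a matter of rewriting conditional densities and telescoping.
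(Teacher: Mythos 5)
Your proof is correct: the paper states Lemma \ref{cor.e2} without proof, treating it as an immediate consequence of Bayes' rule, the chain rule, and the observation assumption \eqref{eq.e1}, which is exactly the argument you supply. Your key reduction $f(Z_{t+1}\mid\Theta_{t+1},Z_{\le t})=f(Z_{t+1}\mid\Theta_{t+1})$ — obtained by integrating out $\Theta_{\le t}$ and invoking \eqref{eq.e1} at index $t+1$ — is the right way to make the omitted step explicit, and your remark that the joint law exists as part of the model definition (even though only the conditionals in \eqref{observation density} and \eqref{eq.e13} are specified) correctly identifies where \eqref{eq.e1} does the work.
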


\begin{remark}\label{rem.1}
Unfortunately, the M-SSM specification does not allow a full examination of the weak stationarity of the latent process, because it leaves the joint distribution of successive latent states---and hence their covariance structure---unspecified. 
More precisely, the M-SSM determines only the conditional marginal laws
\[
f(\Theta_{t+1}\mid Z_{\le t})
\qquad\text{and}\qquad
f(\Theta_t\mid Z_{\le t}),
\qquad t\in\mathbb{Z}.
\]
The former is specified by the marginal evolution equation in \eqref{eq.e13},
whereas the latter is obtained from the filtering update in Lemma~\ref{cor.e2}.
However, the M-SSM does not specify how these two conditional marginals are
coupled; equivalently, the conditional joint law
\[
f(\Theta_t,\Theta_{t+1}\mid Z_{\le t})
\]
remains unspecified.
Consequently, the covariance structure of $(\Theta_t)_{t\in\mathbb{Z}}$ is not identified under the M-SSM formulation. Concrete examples illustrating the non-identifiability of the covariance structure are provided in Section~\ref{sec.3.2}.
\end{remark}

\subsection{State-space formulation of the heterogeneous NB-INGARCH model}

In this subsection, we first introduce a Poisson-Gamma M-LSSM that incorporates
time-varying heterogeneity. We then show
in Proposition~\ref{prop.2} that, under a given restriction on the
heterogeneous NB-INGARCH model, it is equivalent to this Poisson-Gamma M-LSSM.

\begin{example}\label{ex.2}
Let
\begin{equation}\label{exo.1}
(\Delta_t)_{t\in\mathbb{Z}}\subseteq (0,\infty) 
 \quad(c_t)_{t\in\mathbb{Z}}\in[0, \infty), \quad\hbox{and}\quad
 (\lambda_t)_{t\in\mathbb{Z}}\subseteq [0,\infty)
\end{equation}
be exogenous sequences. We define the Poisson-Gamma M-LSSM for a restricted time interval $t\in\mathbb{N}$ with the following specifications:
  \begin{itemize}
    \item[i.] \textbf{Observation equation:}
    \[
      Z_t \mid \Theta_t \;\sim\; \operatorname{Pois}(\lambda_t\Theta_t),
    \]
    \item[ii.]\textbf{Evolution equation:}
    \(
      \Theta_{t+1} \mid Z_{1:t} \;\sim\; \operatorname{Gamma}\bigl(a_{t+1\mid t},\,b_{t+1\mid t}\bigr),
    \)
    where
\[
a_{t+1\mid t}\equiv a_{t+1\mid t}(Z_{\le t})>0
\]
is a prespecified positive function of the past observations \(Z_{\le t}\) as well as the exogenous sequences, and
\[
b_{t+1\mid t}>0
\]
is a prespecified positive function of the exogenous sequences. These quantities are assumed to be chosen so that the linear evolution relation \eqref{eq.e231} holds for the given sequences in \eqref{exo.1}.
 \item [iii.] \textbf{Anchoring at time $t=1$:} Assume
  \begin{align*}
      \Theta_1 \sim \operatorname{Gamma}(a_{1|0}, b_{1|0}),
  \end{align*}
  with $a_{1|0}>0$ and $b_{1|0}=a_{1|0}$.\footnote{Similarly to the restriction $M_{1|0}=1$ in Model \ref{mod.3}, identifiability issues may arise without this restriction.} 
\end{itemize}
The filtering distribution is obtained as
  \[
  a_{t}=a_{t\mid t-1} + Z_{t}\quad \hbox{ and }\quad
  b_{t}=b_{t\mid t-1} + \lambda_t.
  \]

\end{example}

The following result establishes the equivalence between the Poisson-Gamma M-LSSM in Example \ref{ex.2} and the heterogeneous NB-INGARCH model in Model \ref{mod.3} under the additional restriction \eqref{eq.a3.gen}.

\begin{proposition}\label{prop.2}

Fix exogenous sequences
\[
(\beta_t^{[0]},\beta_t^{[1]},\beta_t^{[2]},\lambda_t)_{t\in\mathbb N}, 
\]
as in \eqref{eq.q1}.
Let $(Z_t)_{t\in\mathbb N}$ satisfy the heterogeneous NB-INGARCH specification as in Model \ref{mod.3} with the following further condition
\begin{equation}\label{eq.a3.gen}
\pi_{t\mid t-1}
=
\frac{\lambda_t\beta_t^{[1]}}
{\beta_t^{[2]}+\lambda_t\beta_t^{[1]}},
\qquad t\in\mathbb N.
\end{equation} 
Then there exists a Poisson-Gamma M-LSSM of the form in Example~\ref{ex.2}
which induces this heterogeneous NB-INGARCH model. If further we have
$\lambda_t>0$ for all $t\in\mathbb{N}$, then this Poisson-Gamma M-LSSM is unique.
\end{proposition}

\begin{proof}
We start with the proof of existence.
Note that \eqref{eq.a3.gen} together with the condition
\(
\pi_{t\mid t-1}\in(0,1)
\)
implies that
\(
\lambda_t>0
\)
for all \(t\in\mathbb N\).
Let $(Z_t)_{t\in\mathbb N}$ satisfy the heterogeneous NB-INGARCH specification as in
Model~\ref{mod.3} together with condition \eqref{eq.a3.gen}. Consider the Poisson-Gamma M-LSSM in Example \ref{ex.2} with
\[
a_{t\mid t-1}:=\kappa_{t\mid t-1}>0,
\qquad
b_{t\mid t-1}:=\frac{\beta_t^{[2]}}{\beta_t^{[1]}}>0,
\qquad t\in\mathbb N,
\]
and
\[
\Delta_t:=\beta_t^{[2]}+\lambda_t\beta_t^{[1]}>0,
\qquad
c_t:=\beta_t^{[0]}\ge 0,
\qquad t\in\mathbb N.
\]
Then \(a_{t\mid t-1}\) is a function of \(Z_{1:(t-1)}\), whereas \(b_{t\mid t-1}\),
\(\Delta_t\), and \(c_t\) depend only on the exogenous coefficient sequences.
Hence these quantities have the structural form required in Example~\ref{ex.2}.
In particular, \(b_{t\mid t-1}\) is not a function of the observations.

Moreover, for \(t=1\), since
\[
\E{Z_1}=\lambda_1 M_{1\mid 0}=\lambda_1
\qquad \text{ and } \qquad
Z_1\sim \operatorname{NB}(\kappa_{1\mid 0},\pi_{1\mid 0}),
\]
we have
\[
\lambda_1=\frac{\kappa_{1\mid 0}\pi_{1\mid 0}}{1-\pi_{1\mid 0}}.
\]
Using \eqref{eq.a3.gen} at \(t=1\) gives us
\[
\lambda_1
=
\frac{\kappa_{1\mid 0}\lambda_1\beta_1^{[1]}}{\beta_1^{[2]}},
\]
hence
\[
a_{1\mid 0}
=
\kappa_{1\mid 0}
=
\frac{\beta_1^{[2]}}{\beta_1^{[1]}}
=
b_{1\mid 0}.
\]
Therefore, the anchoring condition in Example~\ref{ex.2} is satisfied.

Moreover, by \eqref{eq.a3.gen},
\[
\frac{\lambda_t}{b_{t\mid t-1}+\lambda_t}
=
\frac{\lambda_t}{\beta_t^{[2]}/\beta_t^{[1]}+\lambda_t}
=
\frac{\lambda_t\beta_t^{[1]}}
{\beta_t^{[2]}+\lambda_t\beta_t^{[1]}}
=
\pi_{t\mid t-1},
\qquad t\in\mathbb N.
\]
Since under Example~\ref{ex.2},
\[
Z_t\mid Z_{1:(t-1)}
\sim
\operatorname{NB}\!\left(a_{t\mid t-1},\,\frac{\lambda_t}{b_{t\mid t-1}+\lambda_t}\right),
\qquad t\in\mathbb N,
\]
where \(t=1\) is understood through the anchoring distribution, we obtain that the predictive distribution of \(Z_t\mid Z_{1:(t-1)}\) under the Poisson-Gamma M-LSSM coincides with that of the given heterogeneous NB-INGARCH model.

Further,
\[
\lambda_t\frac{a_{t\mid t-1}}{b_{t\mid t-1}}
=
\lambda_t\frac{\kappa_{t\mid t-1}}{b_{t\mid t-1}}
=
\frac{\kappa_{t\mid t-1}\,\pi_{t\mid t-1}}{1-\pi_{t\mid t-1}}
=
\E{Z_t\mid Z_{1:(t-1)}}
=
\lambda_t M_{t\mid t-1},
\qquad t\in\mathbb N.
\]
Hence
\[
\frac{a_{t\mid t-1}}{b_{t\mid t-1}}=M_{t\mid t-1},
\qquad t\in\mathbb N.
\]

It remains to verify the linear evolution equation \eqref{eq.e231}. Since
\[
\E{\Theta_t\mid Z_{1:t}}
=
\frac{a_{t\mid t-1}+Z_t}{b_{t\mid t-1}+\lambda_t},
\qquad
\frac{a_{t\mid t-1}}{b_{t\mid t-1}}
=
\frac{\kappa_{t\mid t-1}\,\pi_{t\mid t-1}}{\lambda_t(1-\pi_{t\mid t-1})}
=
M_{t\mid t-1},
\]
we have
\[
\begin{aligned}
\Delta_t \E{\Theta_t\mid Z_{1:t}}+c_t
&=
\Delta_t \frac{a_{t\mid t-1}+Z_t}{b_{t\mid t-1}+\lambda_t}+c_t\\
&=
c_t + \frac{\Delta_t}{b_{t\mid t-1}+\lambda_t}\,Z_t
+ \Delta_t\frac{b_{t\mid t-1}}{b_{t\mid t-1}+\lambda_t}\,M_{t\mid t-1}\\
&=
\beta_t^{[0]}+\beta_t^{[1]}Z_t+\beta_t^{[2]}M_{t\mid t-1}\\
&=
M_{t+1\mid t}\\
&=
\E{\Theta_{t+1}\mid Z_{1:t}}.
\end{aligned}
\]
Hence the linear evolution relation \eqref{eq.e231} holds. Therefore, the Poisson-Gamma M-LSSM in
Example~\ref{ex.2} with these choices induces the stated heterogeneous NB-INGARCH model.
This proves existence.

\medskip

Now, we prove uniqueness.
Assume there exists a Poisson-Gamma M-LSSM of the form as in Example~\ref{ex.2} inducing the given
heterogeneous NB-INGARCH model. Since under Example~\ref{ex.2},
\[
      Z_t \mid \Theta_t \;\sim\; \operatorname{Pois}(\lambda_t\Theta_t)
      \quad\hbox{and}\quad \Theta_t\mid Z_{1:(t-1)}\sim
\operatorname{Gamma}\bigl(a_{t\mid t-1},\,b_{t\mid t-1}\bigr),
\]
the predictive distribution of \(Z_t\mid Z_{1:(t-1)}\) is
\[
Z_t\mid Z_{1:(t-1)}
\sim
\operatorname{NB}\!\left(a_{t\mid t-1},\,\frac{\lambda_t}{b_{t\mid t-1}+\lambda_t}\right),
\qquad t\in\mathbb N,
\]
where \(t=1\) is again understood through the anchoring distribution.
Comparing with Model~\ref{mod.3}, we obtain
\[
a_{t\mid t-1}=\kappa_{t\mid t-1}>0,
\qquad
\frac{\lambda_t}{b_{t\mid t-1}+\lambda_t}=\pi_{t\mid t-1},
\qquad t\in\mathbb N.
\]
Since \(\lambda_t>0\), this gives
\[
b_{t\mid t-1}
=
\lambda_t\frac{1-\pi_{t\mid t-1}}{\pi_{t\mid t-1}}
=
\frac{\beta_t^{[2]}}{\beta_t^{[1]}}>0,
\qquad t\in\mathbb N,
\]
where the last equality follows from \eqref{eq.a3.gen}. In particular,
\(b_{t\mid t-1}\) is uniquely determined by the exogenous coefficient sequences and is free of the observations.
Also, for \(t=1\), the anchoring condition in Example~\ref{ex.2} yields
\[
a_{1\mid 0}=b_{1\mid 0}=\kappa_{1\mid 0}.
\]

Finally, comparing the affine recursion
\[
M_{t+1\mid t}
=
c_t + \frac{\Delta_t}{b_{t\mid t-1}+\lambda_t}\,Z_t
+ \Delta_t\,\frac{b_{t\mid t-1}}{b_{t\mid t-1}+\lambda_t}\,M_{t\mid t-1}
\]
with \eqref{eq.nb1}, we obtain
\[
\Delta_t=\beta_t^{[2]}+\lambda_t\beta_t^{[1]}>0
\qquad \text{ and } \qquad
c_t=\beta_t^{[0]}\ge 0.
\]
Hence, \(a_{t\mid t-1}\), \(b_{t\mid t-1}\), \(\Delta_t\), and \(c_t\) are uniquely
determined. Therefore, the induced Poisson-Gamma M-LSSM is unique.
This completes the proof.
\end{proof}

We now have the equivalence between the heterogeneous NB-INGARCH model and Poisson-Gamma M-LSSM. Hence, the discussion of the weak stationarity under the framework of the heterogeneous NB-INGARCH model can be equivalently addressed through the framework of the Poisson-Gamma M-LSSM model.

\subsection{Mean and variance stationarity of the latent states}

While, in Example \ref{ex.2},  we specify the evolution of the baseline conditional mean
\[
M_{t+1\mid t}=\frac{a_{t+1\mid t}}{b_{t+1\mid t}}
\]
through the linear relation \eqref{eq.e231}, the individual dynamics of \(a_{t+1\mid t}\) and \(b_{t+1\mid t}\) have not yet been prescribed. Example \ref{ex.3} below provides one such specification, where \(a_{t+1\mid t}\) and \(b_{t+1\mid t}\) are chosen according to \eqref{constant variance_q}. 

Among the various possible specifications for the dynamics of \(a_{t+1\mid t}\) and \(b_{t+1\mid t}\), including those considered in \citet{zhu2011negative} and \citet{xu2012model}, the specification below is particularly appealing for insurance ratemaking because it yields latent states \((\Theta_t)_{t\in\mathbb{N}}\) whose mean and variance remain constant over time; see Proposition~\ref{prop.4q} below, taken from \citet{ahn2025observation}. Specifically, the linear evolution condition \eqref{eq.e231} with \(c_t=1-\Delta_t\) guarantees mean stationarity of the Poisson-Gamma M-LSSM in Example~\ref{ex.3}, namely
\[
\E{\Theta_t}\equiv \E{\Theta_1}=1,
\qquad t\ge 1.
\]
It therefore remains to ensure that the variance also remains constant, and the choice of \(q_t\) in Example~\ref{ex.3} is made precisely for this purpose. While Proposition~\ref{prop.4q} guarantees mean and variance stationarity of the latent states, the covariance structure of the latent process remains unspecified under the model definition. A full discussion of weak stationarity, including the covariance structure, is deferred to Section \ref{sec.3.5} and Section~\ref{sec.5}, where alternative ways of specifying the joint evolution are discussed.

\begin{example}\label{ex.3}
Under the same setting of Example \ref{ex.2} with the additional restriction $\Delta_t\le 1$ for all $t\in\mathbb{N}$, replace the evolution equation (item ii) with the following:
  \begin{itemize}
    \item[ii.]\textbf{Evolution equation:}
    \(
      \Theta_{t+1} \mid Z_{1:t} \;\sim\; \operatorname{Gamma}\bigl(a_{t+1\mid t},\,b_{t+1\mid t}\bigr),
    \)
    where
    \[
    b_{t+1\mid t} = q_{t} \left(b_{t|t-1} + \lambda_{t}\right) \,\,\text{ and }\,\,\,
    a_{t+1\mid t} = \Delta_{t} q_{t} \left( a_{t|t-1} +Z_{t}\right)
    +(1-\Delta_{t})b_{t+1\mid t},
    \]
    and where
\begin{equation}\label{constant variance_q}
q_{t} \;=\; \dfrac{1}{\,\Delta_{t}^2 + (1-\Delta_{t}^2)\,\dfrac{b_{t|t-1}+\lambda_{t}}{a_{1|0}}\,}.
\end{equation}
  \end{itemize}
These specifications guarantee that the linear evolution relation \eqref{eq.e231} holds for the given sequences in \eqref{exo.1}, with $c_t = 1-\Delta_t$.
\end{example}

\begin{proposition}\citep{ahn2025observation}
\label{prop.4q}
Under the setting of Example~\ref{ex.3}, we have for all  $t\ge 1$:
\begin{itemize}
  \item[i.] The mean of the latent state is given by  $\E{\Theta_{t}}=1$.
  \item[ii.] The variance of the latent state is given by
  \(
        \Var{\Theta_{t}} \;=\; 1/a_{1|0}.
  \) 
  \item[iii.] The mean of the observation is given by
  $\E{Z_{t}}=\lambda_{t}$.
  \item[iv.] The variance of the observation is given by
  \(
        \Var{Z_{t}}
        \;=\; \lambda_{t} \;+\; \frac{(\lambda_{t})^{\,2}}{a_{1|0}}.
  \)
\end{itemize}
    
\end{proposition}

We now introduce the heterogeneous NB-INGARCH model corresponding to the Poisson-Gamma M-LSSM in Example~\ref{ex.3}, in which the parameters $\kappa_{t+1\mid t}$ and $\pi_{t+1\mid t}$ are fully specified. The equivalence between the two formulations is established in Proposition~\ref{prop.4} below. In contrast to Proposition \ref{prop.2}, no additional restriction is required here, since the relevant constraint has already been built into the model specification. This is the main model proposed in this paper for insurance ratemaking. In the following subsection, we discuss the weak-stationarity property of this model.

\begin{model}[heterogeneous NB-INGARCH model for insurance ratemaking]\label{mod.4}
Given the exogenous specification
\begin{equation}\label{exo.2}
(\Delta_t)_{t\in\mathbb{N}}\subseteq (0,1],
\qquad
(\lambda_t)_{t\in\mathbb{N}}\subseteq [0,\infty),\qquad 
a_{1|0}>0,
\end{equation}
 define recursively the positive sequence
\((b_{t\mid t-1})_{t\in\mathbb{N}}\) by
\[
b_{t+1\mid t}
=
\begin{cases}
a_{1\mid 0}, & t=0;\\
q_t\left(b_{t\mid t-1}+\lambda_t\right), & t\in\mathbb{N},
\end{cases}
\]
where $q_t$ is defined in \eqref{constant variance_q}.
Also define
\[
\kappa_{t+1\mid t}
:=
\begin{cases}
a_{1\mid 0}, & t=0;\\
\Delta_t q_t\left(\kappa_{t\mid t-1}+Z_t\right)
+
(1-\Delta_t)b_{t+1\mid t}, & t\in\mathbb{N},
\end{cases}
\]
and
\[
\pi_{t\mid t-1}:=\frac{\lambda_t}{b_{t\mid t-1}+\lambda_t},
\qquad t\in\mathbb{N}.
\]

We define the heterogeneous NB-INGARCH model for insurance ratemaking for
\((Z_t)_{t\in\mathbb{N}}\) by the following specifications:
\begin{itemize}
  \item[i.] \textbf{Evolution equation:}
  The sequence of the baseline conditional mean
  \((M_{t\mid t-1})_{t\in\mathbb{N}}\) is defined by
  \[
  M_{t+1\mid t}
  :=\begin{cases}
  1, & t=0;\\
  \beta_t^{[0]}+\beta_t^{[1]}Z_t+\beta_t^{[2]}M_{t\mid t-1}, & t\in\mathbb{N},
  \end{cases}
  \]
where
  \[
  \beta_t^{[0]}:=1-\Delta_t,
  \qquad
  \beta_t^{[1]}:=\frac{\Delta_t}{b_{t\mid t-1}+\lambda_t},
  \qquad
  \beta_t^{[2]}:=\Delta_t\frac{b_{t\mid t-1}}{b_{t\mid t-1}+\lambda_t}.
  \]

  \item[ii.] \textbf{Observation equation:}
  Conditionally on \(Z_{1:(t-1)}\), 
  \[
  Z_t\mid Z_{1:(t-1)}
  \sim
  \operatorname{NB}\bigl(\kappa_{t\mid t-1},\pi_{t\mid t-1}\bigr).
  \]
  Hence, we have
  \(
  \E{Z_t\mid Z_{1:(t-1)}}=\lambda_t M_{t\mid t-1}
  \).

  \item[iii.] \textbf{Anchoring at time \(t=1\):} Initialize
  \(
  Z_1\sim \operatorname{NB}(\kappa_{1\mid 0},\pi_{1\mid 0})
  \).
  Hence, we have
  \(
  M_{1\mid 0}=1
  \).
\end{itemize}
\end{model}

We omit the proof of Proposition \ref{prop.4} as it is analogous to that of Proposition \ref{prop.2}.

\begin{proposition}\label{prop.4}
Let $(Z_t)_{t\in\mathbb N}$ satisfy the heterogeneous NB-INGARCH specification as in Model \ref{mod.4}.
Then there exists a Poisson-Gamma M-LSSM of the form in Example~\ref{ex.3}
which induces this heterogeneous NB-INGARCH model. If further we have
$\lambda_t>0$ for all $t\in\mathbb{N}$, then this Poisson-Gamma M-LSSM is unique.
\end{proposition}

\subsection{Discussion on the weak stationarity of the heterogeneous NB-INGARCH model}
\label{sec.3.5} 

While the Poisson-Gamma M-LSSM in Example~\ref{ex.3} guarantees that the latent states have time-invariant means and variances, as shown in Proposition~\ref{prop.4q}, it cannot itself determine its covariance structure.
The difficulty is that an M-LSSM specifies only the conditional marginal distributions of the latent states as shown in Remark \ref{rem.1}, the covariance, \(\cov{\Theta_t,\Theta_{t+k}}\), is not identified within the M-LSSM formulation alone.

As will be shown in Section~\ref{sec.5}, any M-SSM admits an SSM\footnote{The formal definition of the SSM would be introduced in Model~\ref{mod.100} in Section \ref{sec.5}.} extension by additionally specifying a conditional joint evolution law for successive latent states that is consistent with the marginal evolution prescribed by the M-SSM.
The key question is whether these extensions are consistent in the sense that, although their full joint dynamics may differ, they all induce the same covariance structure for the latent process. If so, then the covariance structure---and hence weak stationarity---is well defined for the Poisson-Gamma M-LSSM in Example~\ref{ex.3} despite the non-uniqueness of the extension. 
Although the model is already suitable for likelihood-based inference and prediction, this additional result is needed to fully justify its use particularly in short-term insurance applications, by establishing the weak stationarity of the latent states.
The following theorem, which is the main result of this paper, addresses this question.
The proof of this theorem appears at the end of Section \ref{sec.5}.

\begin{theorem}
\label{thm.1}
For the process
$(Z_t, \Theta_t)_{t\in \mathbb{N}}$ following the heterogeneous M-LSSM of Example \ref{ex.3}, there exists a state-space model\footnote{See Model \ref{mod.100} in Section \ref{sec.5} for the definition of the state-space model.} representation $(Z_{t}^\ast,\Theta_{t}^\ast)_{t\in\mathbb{N}}$  satisfying, for \(Z_{1:t}\)-a.e.~observation path \(z_{1:t}\),
      \[
        \Theta_{t}  \;\big|\, Z_{1:t}=z_{1:t}
        \eqd
        \Theta_{t}^\ast \;\big|\, Z_{1:t}^\ast=z_{1:t}, 
        \quad\quad
        \Theta_{t+1} \;\big|\, Z_{1:t}=z_{1:t}
        \eqd
        \Theta_{t+1}^\ast \;\big|\, Z_{1:t}^\ast=z_{1:t},
      \]
      and
      \begin{align}
\E{\Theta_{t+1}^*|\Theta_t^*, Z_{1:t}=z_{1:t}}=\Delta_t \Theta_t^* + (1-\Delta_t).
\label{eq.c1}
      \end{align}
      Furthermore, the covariance structure of $(\Theta_{t}^*)_{t\ge 1}$ is uniquely determined by $\Theta_t$ as
\begin{align*}
\cov{\Theta^*_t, \Theta^*_{t+k}}&= \Var{\Theta_t}\prod_{j=0}^{k-1}\Delta_{t+j}.
\end{align*}
 Hence, the weak stationarity of $\Theta_{ t}$ holds provided that $\Delta_{ t} = \Delta_{ 1}$ for all $t \ge 1$. 
\end{theorem}

For the interpretation of Theorem~\ref{thm.1}, we use the following terminology. 
For any given M-SSM, an SSM representation
\((Z_t^\ast,\Theta_t^\ast)_{t\in\mathbb N}\) is called a \emph{lift} if it preserves the conditional marginal laws of the latent states specified by the M-SSM; that is, for \(Z_{1:t}\)-a.e.~observation path \(z_{1:t}\),
\[
\Theta_t\mid Z_{1:t}=z_{1:t}
\eqd
\Theta_t^\ast\mid Z_{1:t}^\ast=z_{1:t},
\qquad
\Theta_{t+1}\mid Z_{1:t}=z_{1:t}
\eqd
\Theta_{t+1}^\ast\mid Z_{1:t}^\ast=z_{1:t}.
\]

Using this terminology, Theorem~\ref{thm.1} shows that the Poisson-Gamma M-LSSM in Example~\ref{ex.3}, whose latent states already have time-invariant mean and variance, admits a lift satisfying the condition \eqref{eq.c1} in which the covariance structure of the latent process is well defined. Moreover, while the full joint evolution in the lift need not be unique, this covariance structure is unique among all such lifts. Hence, weak stationarity is not an artifact of a particular choice of lift. In particular, if
\(\Delta_t\equiv \Delta\) for all \(t\ge 1\), then
\[
\cov{\Theta_t^\ast,\Theta_{t+k}^\ast}
=
\frac{\Delta^k}{a_{1\mid 0}},
\]
so the lifted latent state process is weakly stationary. Consequently, based on the model equivalence in Proposition~\ref{prop.4}, the heterogeneous NB-INGARCH model in Model~\ref{mod.4} admits an unambiguous weakly stationary latent-state interpretation under this specification.

The next section develops the state-space representation needed to justify
Theorem~\ref{thm.1}.  Readers mainly interested in practical implementation may skip
directly to Section~\ref{sec.6}.

\section{State-space representations and the weak stationarity of latent states}
\label{sec.5}

This section introduces the SSM, in which the joint evolution of successive latent states is explicitly specified, as well as its linear version: the \emph{linear state-space model} (LSSM). See \eqref{eq.e23} below for the formal definition of LSSM.
As shown in Section~\ref{sec.3.2} below, any M-(L)SSM admits a lift by specifying a conditional copula for successive latent states.
However, when the lift is restricted to LSSM, i.e., equipped with a certain linear evolution property, it is no longer automatic that there exists a lift satisfying the desired linear evolution property. The first goal of this section is therefore to characterize when a given M-LSSM admits such a linear lift, namely an LSSM that induces the same conditional marginal laws as the M-LSSM and satisfies the corresponding linear conditional-mean evolution.

Such a linear lift provides a covariance structure for the latent process and therefore makes it possible to discuss weak stationarity of the latent states. However, this covariance structure is not intrinsic to the M-LSSM itself, since the M-LSSM specifies only the conditional marginal laws and not the joint distribution of successive latent states. The covariance structure is induced by any particular LSSM lift is just one among many possible such lifts. Consequently, for weak-stationarity statements to be well defined under the M-LSSM formulation, one must show that all admissible LSSM lifts induce the same covariance structure. This is the second goal of this section. If this holds, then the covariance structure can be regarded as uniquely determined by the original M-LSSM, even though the full joint evolution need not be unique.

We first develop the general lifting result for M-LSSMs. We then specialize it to the Poisson-Gamma M-LSSM in Example~\ref{ex.2} and \ref{ex.3}, or equivalently the NB-INGARCH models in Models~\ref{mod.3} and~\ref{mod.4}.

\subsection{The linear state-space model}

The specification of an M-SSM leaves out a coherent model for the joint distribution of the latent state-observation process, $(\Theta_t, Z_t)_{t \in \mathbb{Z}}$. This shortcoming is addressed in the following definition, which fully specifies these dynamics.

\begin{model}[SSM]
\label{mod.100}
\noindent
The stochastic process consisting of observable responses $(Z_t)_{t\in\mathbb{Z}}$ and latent states $(\Theta_t)_{t\in\mathbb{Z}}$ is an SSM if, for all $t\in\mathbb{Z}$, it satisfies
\begin{equation}\label{eq.e1B}
f(Z_t \mid Z_{\le t-1}, \Theta_{\le t}) =f(  Z_t \mid \Theta_{t}),
\end{equation}
and
\begin{equation}\label{eq.e21}
f(\Theta_{t+1} \mid Z_{\le t}, \Theta_{\le t})=f(\Theta_{t+1} \mid Z_{\le t}, \Theta_{t}).
\end{equation}
\end{model}

\bigskip

Analogously to item ii of Lemma \ref{cor.e2}, we have the following result showing the well-definedness of the SSM.

\begin{corollary}\label{cor.e1}
Under the setting of Model \ref{mod.100}, 
the joint densities, for $t\ge 1$, are given by
\[
f(Z_{\le t},\Theta_{\le t}) \;=\;
\displaystyle
f(Z_{\le 0} , \Theta_{\le 0})\,
\prod_{j=1}^{t}
  f(Z_j\mid\Theta_j)\,
\prod_{j=0}^{t-1}  f(\Theta_{j+1}\mid Z_{\le j},\Theta_{j}).
\]
The corresponding one-step filtering update is given  by
\[
f(\Theta_{t+1}\mid Z_{\le t+1}, \Theta_{t})
\;\propto\;
f(Z_{t+1}\mid\Theta_{t+1})\,f(\Theta_{t+1}\mid Z_{\le t},\Theta_{t}).
\]
\end{corollary}

\bigskip

As explained in Corollary \ref{cor.e1}, an SSM is completely specified by the two laws
\begin{equation}\label{observation density}
f(Z_t \mid \Theta_{t})  \quad \text{and} \quad
f(\Theta_{t+1} \mid Z_{\le t}, \Theta_{t}).
\end{equation}
Clearly, these correspond respectively to the observation equation \eqref{eq.e1} and the marginal evolution equation \eqref{eq.e21} of the M-SSM. The former retains its role and name as the {\it observation equation}. As the latter now explicitly specifies the manner in which one arrives at the current state from the preceding ones, we call it the {\it full evolution equation}  to emphasize its distinction from its counterpart in \eqref{eq.e13} of the M-SSM. 

With the joint distribution of the latent states fully specified, Model \ref{mod.100} may be equipped with the following additional linear evolution property, which is a refinement of \eqref{eq.e231}:
\begin{equation}\label{eq.e23}
\E{\Theta_{t+1}|Z_{\leq t}, \Theta_t} = \Delta_t \Theta_t + c_t.
\end{equation}
When so equipped, the model is referred to as an LSSM. Our main interest is in the second-order stationarity of the LSSMs.
The following result is useful for the calculation of the covariance of the latent states as well as the observable responses in LSSMs.

\begin{lemma}\label{cov.structure}
Consider an LSSM satisfying \eqref{eq.e23} with evolution coefficients $(\Delta_t)_{t \in \mathbb{Z}}$ and $(c_t)_{t\in\mathbb{Z}}$, and with finite second moments $\E{\Theta_t^2}<\infty$ for all $t$.
\begin{itemize}
  \item[i.] For any $k\ge 1$,   we have
\begin{equation}\label{eq.cov0}
\cov{\Theta_t, \Theta_{t+k}} =\Var{\Theta_t}\,\prod\limits_{j=0}^{k-1}\Delta_{t+j}.
\end{equation}
  \item[ii.] If additionally the observation equation satisfies $\E{Z_u\mid \Theta_u}=\lambda_u\Theta_u$ for all $u\in\mathbb{Z}$ and for some exogenous constants $(\lambda_u)_{u\in\mathbb{Z}}\subseteq [0, \infty)$, then for any $k\ge 1$,
  \[
    \cov{Z_t, Z_{t+k}} \;=\; \lambda_t \lambda_{t+k}\cov{\Theta_t, \Theta_{t+k}}.
  \]
\end{itemize}

\end{lemma}
\begin{proof}
Under the above assumptions, for $k\ge 1$ we have the recursion 
\[
\begin{aligned}
\E{\Theta_{t+k}\mid \Theta_t, Z_{\le t}}
&=\E{\E{\Theta_{t+k}\mid \Theta_{\le t+k-1}, Z_{\le t+k-1}}\mid \Theta_{t}, Z_{\le t}}\\
&=\Delta_{t+k-1}\, \E{\Theta_{t+k-1}\mid \Theta_{ t}, Z_{\le t}} +c_{t+k-1}.
\end{aligned}
\]
Recursively expanding, we have
\begin{equation}\label{eq.condexp}
\E{\Theta_{t+k}\mid \Theta_{t}, Z_{\le t}}=\Theta_t \prod\limits_{j=0}^{k-1}\Delta_{t+j} +c_{t:t+k-1}^*,
\end{equation}
where we set
\[
c_{t:t+k-1}^*
\;=\;
\sum_{r=0}^{k-1}
      c_{t+r}
      \prod_{j=r+1}^{k-1} \Delta_{t+j},
\]
with an empty product set to one.
As a result, applying the tower property, we have
\[
\E{\Theta_{t+k}\mid Z_{\le t}}=\E{\Theta_t\mid Z_{\le t}} \,\prod\limits_{j=0}^{k-1}\Delta_{t+j} +c_{t:t+k-1}^*,
\]
and the serial covariance structure
\[
\cov{\Theta_t, \Theta_{t+k}} = \cov{\Theta_t, \E{\Theta_{t+k}\mid \Theta_t, Z_{\le t}}}
= \cov{\Theta_t, \Theta_t \,\prod\limits_{j=0}^{k-1}\Delta_{t+j}}
=\Var{\Theta_t}\,\prod\limits_{j=0}^{k-1}\Delta_{t+j}.
\]

\noindent
For item~ii, using \eqref{eq.condexp}, we have
\[
\begin{aligned}
\cov{Z_t, Z_{t+k}}
&= \cov{Z_t, \E{ Z_{t+k}  \mid  Z_{\le t+k-1}, \Theta_{\le t+k}} }=\cov{Z_t, \lambda_{t+k}\Theta_{t+k}}\\
&= \lambda_{t+k}\cov{Z_t, \E{\Theta_{t+k}|Z_{\leq t}, \Theta_{\leq t}}}\\
&= \lambda_{t+k}\cov{Z_t, \Theta_t \prod\limits_{j=0}^{k-1}\Delta_{t+j} +c_{t:t+k-1}^*}
= \lambda_{t+k}\left(\prod\limits_{j=0}^{k-1}\Delta_{t+j}\right)\cov{Z_t, \Theta_t}.
\end{aligned}
\]
The claim then follows from item i, because
\[
\cov{Z_t, \Theta_t} = \cov{\E{Z_t|\Theta_t}, \Theta_t} = \cov{\lambda_t \Theta_t, \Theta_t}=\lambda_t\Var{\Theta_t}.
\]
This completes the proof.
\end{proof}

\subsection{Identifiability of the SSM corresponding to an M-SSM} 
\label{sec.3.2}
Every SSM induces the corresponding M-SSM by retaining the same observation equation and marginalizing the full evolution equation over the current latent state. More precisely, if an SSM has full evolution density
\[
f(\Theta_{t+1}\mid Z_{\le t},\Theta_t),
\]
then the marginal evolution equation of the induced M-SSM is obtained as
\begin{equation}\label{eq.52}
f(\Theta_{t+1}\mid Z_{\le t})
=
\int
f(\Theta_{t+1}\mid Z_{\le t},\Theta_t=\theta_t)\,
f(\theta_t\mid Z_{\le t})\,
{\rm d}\nu(\theta_t),
\end{equation}
where \(f(\Theta_t\mid Z_{\le t})\) is the filtering density of $\Theta_t$ induced by the SSM and \(\nu\) is the reference measure for the latent state. Thus, the marginal evolution equation in \eqref{eq.e13} is precisely the conditional marginal law obtained by integrating out \(\Theta_t\) from the full state evolution.

Conversely, a given M-SSM can always be lifted to at least one SSM.
For this, assume an M-SSM with the following conditional marginal laws
\begin{equation}\label{eq.51}
f(\Theta_t\mid Z_{\le t})
\quad\hbox{and}\quad
f(\Theta_{t+1}\mid Z_{\le t}),
\end{equation}
where the former is obtained from the filtering recursion and the latter is specified by the marginal evolution equation. With these two marginal laws, any admissible conditional copula, given \(Z_{\le t}\), defines a compatible conditional joint law:
\begin{equation}\label{eq.53}
f(\Theta_t,\Theta_{t+1}\mid Z_{\le t})=c\!\left(
F(\Theta_t\mid z_{\le t}),
F(\Theta_{t+1}\mid Z_{\le t})
\mid Z_{\le t}
\right) f(\Theta_{t}\mid Z_{\le t}) 
f(\Theta_{t+1}\mid Z_{\le t}),
\end{equation}
where \(F(\cdot\mid Z_{\le t})\) denotes the corresponding conditional cumulative distribution function, and \(c(\cdot,\cdot\mid Z_{\le t})\) is the density of the chosen (absolutely continuous) conditional copula.
As a result, this gives the full evolution equation
\[
\begin{aligned}
f(\Theta_{t+1}\mid Z_{\le t},\Theta_t)
&=
\frac{
f(\Theta_t,\Theta_{t+1}\mid Z_{\le t})
}{
f(\Theta_t\mid Z_{\le t})
}  \\
&=
c\!\left(
F(\Theta_t\mid Z_{\le t}),
F(\Theta_{t+1}\mid Z_{\le t})
\,\middle|\, Z_{\le t}
\right)
f(\Theta_{t+1}\mid Z_{\le t}).
\end{aligned}
\]
Since the conditional copula in \eqref{eq.53} preserves the specified conditional marginals, marginalizing \eqref{eq.53} over \(\Theta_t\) recovers the original marginal evolution density \(f(\Theta_{t+1}\mid Z_{\le t})\).

The following example gives one such construction for a particular Poisson-Gamma M-SSM of the form considered in Example~\ref{ex.2}. This construction was originally proposed by \citet{harvey1989time} and has subsequently been used in actuarial applications \citep{bolance2007greatest, abdallah2016sarmanov, youn2023simple}. Following the original formulation, the model is initialized by specifying the distribution of the latent state at \(t=1\).

\begin{example} \citep{harvey1989time}\label{ex.w13_}
We define the SSM with the following specifications:
  \begin{description}

    \item[i. \textbf{Observation equation:}]
    \(
      Z_{t} \mid \Theta_{t} \;\sim\; \operatorname{Pois}(\Theta_{t}).
    \)

    \item[ii. \textbf{Evolution equation:}]
For given 
    \(
      \Theta_{t} \mid Z_{1:t} \;\sim\; \operatorname{Gamma}\bigl(a_{t},\,b_{t}\bigr),
    \)
    where $a_t:=a_t(Z_{1:t})$ and $b_t:=b_t(Z_{1:t})$ are positive functions of the observable responses $Z_{1:t}$,
    define the evolution
    \[
    \Theta_{t+1}:=\frac{\Theta_tB_{t+1}}{q},
    \]
    where $q\in(0,1)$ is a constant and
    \[
    B_{t+1}\mid B_{1:t}, Z_{1:t}, \Theta_{1:t} \sim \operatorname{Beta}(qa_t, (1-q)a_t).
    \]
Thus, $\Theta_{t+1}$ is a \emph{Beta-thinned} and scaled version of $\Theta_t$. This yields
\begin{equation}\label{eq.o1}
      \Theta_{t+1} \mid Z_{1:t} \;\sim\; \operatorname{Gamma}\bigl(a_{t+1\mid t},\,b_{t+1\mid t}\bigr),
      \qquad\text{where}\qquad
      a_{t+1\mid t}:=qa_t,\quad b_{t+1\mid t}:=qb_t.
\end{equation}
    \item[iii. \textbf{Anchoring at $t=1$}:] Assume
  \begin{equation}\label{eq.o0}
  \Theta_{1} \;\sim\;
              \operatorname{Gamma}\bigl(a_{1\mid 0},\,b_{1\mid 0}\bigr),
  \end{equation}
  with constants $a_{1\mid 0}= b_{1\mid 0}>0$, so that $\mu_{1\mid 0}:=\E{\Theta_1}=1$.
  \end{description}

Using the Poisson-Gamma conjugate property and the evolution of the parameters in \eqref{eq.o1} along with the anchoring assumption
in \eqref{eq.o0}, we have the following recursion
\[
a_{t+1\mid t}=q\bigl(a_{t\mid t-1}+Z_t\bigr),
\quad\hbox{and}\quad 
b_{t+1\mid t}=
q\bigl(b_{t\mid t-1}+1\bigr), \qquad t\in\mathbb{N}_0,
\]
and hence note that $b_{t+1|t}$ is free of the observations while $a_{t+1|t}$ is a function of $Z_{1:t}$.

Now, we derive the moments of $(\Theta_t)_{t\ge 1}$ and $(Z_t)_{t\ge 1}$ using the evolution of the parameters in  \eqref{eq.o1} along with the anchoring assumption
in \eqref{eq.o0}. For example, by the tower property we have
\begin{equation}\label{eq.o2}
\E{\Theta_t}=1
\qquad\text{and}\qquad
\E{Z_t}=1,
\qquad t\ge 1.
\end{equation}
Further, we have
\[
\begin{aligned}
\Var{\Theta_{t+1}}
&=\Var{\E{\Theta_{t+1}\mid Z_{1:t}}} + \E{\Var{\Theta_{t+1}\mid Z_{1:t}}}\\
&=\Var{\frac{a_t}{b_t}} + \frac{1}{q}\E{\frac{a_t}{b_t^2}}\\
&=\Var{\Theta_t} + \left(\frac{1}{q}-1 \right)\frac{1}{b_t}\E{\frac{a_t}{b_t}} \\
&=\Var{\Theta_t}
+
\frac{1-q}{q\,b_t},
\end{aligned}
\]
where the second equality used the fact that $b_t$ is a function of $q$ and $t$, and the last equality is from the tower property applied to $\Var{\Theta_t}$.

Importantly, since $\E{B_{t+1}\mid Z_{1:t}}=q$, the evolution satisfies
\begin{equation}\label{eq.evo}
\E{\Theta_{t+1}\mid \Theta_t, Z_{1:t}}
=
\Theta_t,
\end{equation}
hence this SSM exhibits a linear evolution (i.e., \eqref{eq.e23}) with $\Delta_t\equiv 1$ and $c_t\equiv 0$.
Thus, Lemma \ref{cov.structure} gives, for $t,k\ge 1$,
\begin{equation}\label{eq.o5}
\cov{\Theta_t,\Theta_{t+k}}=\Var{\Theta_t}
\qquad\text{and}\qquad
\cov{Z_t,Z_{t+k}}=\cov{\Theta_t,\Theta_{t+k}}=\Var{\Theta_t}.
\end{equation}
This closes the example.
\end{example}

Integrating out $\Theta_t$ from the distribution of $\Theta_{t+1}|Z_{1:t}, \Theta_t$ yields Example \ref{ex.3} with the specifications $\lambda_t \equiv 1$ and $\Delta_t\equiv 1$, but with the alternate specification $q_t \equiv  q$ for some $q\in(0,1)$. This alternate specification of $q_t$ means that variance stationarity is no longer guaranteed, and indeed it increases over time. 

Importantly, an M-SSM does not determine an SSM lift uniquely. 
Indeed, different admissible choices of the conditional copula, or equivalently of the conditional joint distribution in \eqref{eq.53}, may lead to different SSMs inducing the same M-SSM, as the following example along with Example~\ref{ex.w13_} shows. In particular, the covariance structure of \((\Theta_t)_{t\ge 1}\) may depend on the chosen lift.
Consequently, the mere existence of an SSM lift is not sufficient to make weak-stationarity statements well defined under an M-SSM specification. 

\begin{example}\label{ex.4}
The evolution equation in Example \ref{ex.w13_} can be replaced, e.g., by 
\begin{description}
      \item[ii. \textbf{Evolution equation:}]
For given $q\in(0,1)$ and 
    \(
      \Theta_{t} \mid Z_{1:t} \;\sim\; \operatorname{Gamma}\bigl(a_{t},\,b_{t}\bigr),
    \)
    where $a_t:=a_t(Z_{1:t})$ and $b_t:=b_t(Z_{1:t})$ are positive functions of the observable responses $Z_{1:t}$,
    define the evolution
    \[
    \Theta_{t+1}:=\Theta_{t+1}^*,
    \]
    where
    \[
    \Theta_{t+1}^*\mid B_{1:t}, Z_{1:t}, \Theta_{1:t} \sim \operatorname{Gamma}(qa_t,\,qb_t).
    \]
\end{description}
A straightforward calculation shows that Example~\ref{ex.w13_} with this alternative evolution equation
induces the same \emph{marginal} evolution as in Example \ref{ex.3}.

However, the \emph{dependence} structure is different. In particular, by construction we have in the latter evolution equation assumption
\[
\Theta_{t+1} \perp \Theta_t \ \big|\ Z_{1:t},
\]
which gives for $t\ge 1$
\[
\begin{aligned}
\cov{\Theta_t,\Theta_{t+1}}&=\cov{\Theta_t,\E{\Theta_{t+1}\mid\Theta_t, Z_{1:t}}}\\
&=\cov{\Theta_t, \E{\Theta_t\mid Z_{1:t}}}\\
&=\cov{\E{\Theta_t\mid Z_{1:t}}, \E{\Theta_t\mid Z_{1:t}}}\\
&=\Var{\E{\Theta_t\mid Z_{1:t}}}\\
&<\Var{\E{\Theta_t\mid Z_{1:t}}} + \frac{1}{b_t}=\Var{\Theta_t},
\end{aligned}
\]
where the first three equalities are from the tower property of covariance, and the last equality is from the tower property of variance and mean stationarity.
\end{example}

The above examples also illustrate how the full linear evolution property \eqref{eq.e23} is \emph{stricter} than the marginal linear evolution property \eqref{eq.e231}. In particular, although their marginal evolution equations coincide, it is the fact that the two versions of Example \ref{ex.w13_} and Example \ref{ex.4} differ in their full linear evolution properties which allows the covariance structure specified to differ. The full evolution equation of Example \ref{ex.w13_} is given by \eqref{eq.evo}, i.e., it satisfies the linear evolution property \eqref{eq.e23} with $\Delta_t=1$ and $c_t=0$, which leads to the covariance of the form in \eqref{eq.o5} induced by Lemma \ref{cov.structure}. However, Example \ref{ex.4} does not have the same covariance as in \eqref{eq.o5}, which, with the help of Lemma \ref{cov.structure}, implies that they do not match in their full linear evolution coefficients.

\subsection{Lifting an M-LSSM to an LSSM}
\label{Lifting  M-LSSM to LSSM}
Our next goal is to lift an M-LSSM satisfying the marginal linear relation \eqref{eq.e231} to an LSSM that satisfies the linear conditional-mean property  \eqref{eq.e23}, both with identical sequences $(\Delta_t, c_t)_{t\in\mathbb{Z}}$. Note that \eqref{eq.e23} implies \eqref{eq.e231}.
As discussed in Section \ref{sec.3.2}, one can lift any M-SSM to an SSM by specifying the copula. However, if one additionally imposes the condition \eqref{eq.e23} on the SSM, then 
 the existence of such a lift is no longer guaranteed. 
 
 This subsection seeks conditions under which such a lift exists. Lemma \ref{cor:varNec} in \ref{app} shows that a necessary requirement is the conditional convex ordering
\[ \left[\Delta_t\Theta_t+c_t \mid Z_{\le t}\right]
\;\preceq_{\mathrm{cx}}\;
\left[\Theta_{t+1}\mid Z_{\le t}\right], \qquad \text{a.s.,}
\]
for given constants $\Delta_t, c_t\in\Real$.
If this convex ordering fails to hold on a set of histories $z_{\le t}$, with positive probability, then no SSM lift with the conditional mean as in \eqref{eq.e23} exists. In particular, since the convex ordering implies the variance ordering, provided second moments exist, we also have the following conditional variance inequality
\begin{align*}
\Var{\Delta_t\Theta_t+c_t \mid Z_{\le t}}
  \;\le\;
  \Var{\Theta_{t+1} \mid Z_{\le t}}, \qquad \text{a.s.}
\end{align*}

Interestingly, this conditional convex ordering is also sufficient for the existence of the lift as Theorem \ref{thm.3} below shows. 
This follows from the classical result of \citet{strassen1965coupling}, or more precisely \citet{leskela2017condconvex}. 
The proof of Theorem \ref{thm.3} as well as the full statement of the result in \citet{leskela2017condconvex} can be found in \ref{app.2}.

\begin{theorem}\label{thm.3}
Let \((Z_t)_{t\in \mathbb{Z}}\) and \((\Theta_t)_{t\in \mathbb{Z}}\) form an M-LSSM (i.e., Model~\ref{mod.2} satisfying \eqref{eq.e231}, $Z_{\leq t}$-a.s.).
Then, the following are equivalent:

\begin{enumerate}
\item[i.] There exist processes \((Z_t^\ast)_{t\in\mathbb{Z}}\) and \((\Theta_t^\ast)_{t\in\mathbb{Z}}\) forming an LSSM (i.e., Model \ref{mod.100} satisfying \eqref{eq.e23}) such that, for \(Z_{\le t}\)-a.e.~observation path \(z_{\le t}\),
      \[
        \Theta_t  \;\big|\, Z_{\le t}=z_{\le t}
        \;\eqd\;
        \Theta_t^\ast \;\big|\, Z_{\le t}^\ast=z_{\le t}
        \qquad \text{ and }\qquad
        \Theta_{t+1} \;\big|\, Z_{\le t}=z_{\le t}
        \;\eqd\;
        \Theta_{t+1}^\ast \;\big|\, Z_{\le t}^\ast=z_{\le t},
      \]
      and with the evolution equation
      \[
        \E{\Theta_{t+1}^\ast\mid \Theta_t^\ast, Z_{\le t}}=\Delta_t\Theta_t^\ast+c_t
        ,\quad\text{a.s.}
      \]

\item[ii.] The M-LSSM satisfies
      \[
        \left[\Delta_t\Theta_t+c_t\mid Z_{\le t}\right]\;\preceq_{\mathrm{cx}}\;\left[\Theta_{t+1}\mid Z_{\le t}\right].
      \]
\end{enumerate}
\end{theorem}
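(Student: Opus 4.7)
The direction (i) $\Rightarrow$ (ii) follows by the same computation that yields Lemma~\ref{cor:varNec}: the evolution equation \eqref{eqq.1} and the tower property give $\E{\Theta_{t+1}^\ast \mid Z_{\le t}^\ast} = \E{m_t(\Theta_t^\ast) \mid Z_{\le t}^\ast}$, while conditional Jensen applied to an arbitrary convex $\phi$ supplies $\E{\phi(m_t(\Theta_t^\ast)) \mid Z_{\le t}^\ast} \le \E{\phi(\Theta_{t+1}^\ast) \mid Z_{\le t}^\ast}$. Translating these through the prescribed equalities in law between the starred and unstarred marginals yields (ii).

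For the main direction (ii) $\Rightarrow$ (i), the plan is to manufacture, for each $t \in \mathbb{Z}$, a regular Markov evolution kernel
\[
K_t\bigl(z_{\le t}, \theta_t;\, d\theta_{t+1}\bigr)
\]
that will play the role of $\pi^\ast(\Theta_{t+1}^\ast \mid \Theta_t^\ast = \theta_t,\, Z_{\le t}^\ast = z_{\le t})$ and that satisfies the two structural requirements: the conditional-mean identity $\int \theta_{t+1}\, K_t(z_{\le t},\theta_t;\, d\theta_{t+1}) = m_t(\theta_t)$, and the marginal compatibility $\int K_t(z_{\le t}, \theta_t;\, \cdot)\,\pi(\theta_t \mid z_{\le t})\, d\nu(\theta_t) = \pi(\Theta_{t+1} \mid Z_{\le t} = z_{\le t})$. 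The kernel $K_t$ is produced by invoking Theorem~\ref{condcoupling} pointwise in $t$ with $X = m_t(\Theta_t)$, $Y = \Theta_{t+1}$ and $Z = Z_{\le t}$; hypothesis (ii) is exactly the input it requires. A regular disintegration of the resulting coupling against $(\Theta_t^\ast, Z_{\le t}^\ast)$, available under the standard Polish state-space assumption, turns the coupling into a genuine measurable kernel.

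With the family $(K_t)_{t \in \mathbb{Z}}$ and the observation kernels $f(Z_t \mid \Theta_t)$ in hand, the full process is assembled by an Ionescu--Tulcea construction anchored as in Corollary~\ref{cor.e1}: draw the initial joint law from the M-SSM filtering distribution supplied by Lemma~\ref{cor.e2}, then alternately evolve the latent state via $K_t$ and emit observations via $f(Z_t \mid \Theta_t)$. By construction the joint law satisfies the O-SSM conditional-independence structure \eqref{eq.e1}--\eqref{eq.e21}, and the linear conditional-mean property \eqref{eqq.1} holds because $K_t$ was built to encode $m_t$.

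The main obstacle is verifying that this construction reproduces the M-SSM's marginal $\pi(\Theta_t \mid Z_{\le t})$ at every $t$. The natural approach is a forward induction: if $\pi^\ast(\Theta_t \mid Z_{\le t}) = \pi(\Theta_t \mid Z_{\le t})$, then integrating $\Theta_t^\ast$ against $K_t$ using marginal compatibility gives $\pi^\ast(\Theta_{t+1} \mid Z_{\le t}) = \pi(\Theta_{t+1} \mid Z_{\le t})$; the Bayes update in Lemma~\ref{cor.e2}(i) then propagates the match through the observation step, yielding $\pi^\ast(\Theta_{t+1} \mid Z_{\le t+1}) = \pi(\Theta_{t+1} \mid Z_{\le t+1})$. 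Anchoring this induction consistently over the doubly-infinite index set $\mathbb{Z}$ is the one genuinely subtle point; this is handled either by a Kolmogorov extension argument on the consistent finite-dimensional laws generated by the M-SSM marginals, or by fixing a reference time and letting it drift to $-\infty$ using the uniqueness of the filtering recursion. Once the marginal identifications are in place, the two required equalities in law at $t$ and $t+1$ conditional on $Z_{\le t}$ follow immediately, completing (i).
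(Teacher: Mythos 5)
Your proposal is correct and follows essentially the same route as the paper: (i)$\Rightarrow$(ii) via the tower property and conditional Jensen (Lemma~\ref{cor:varNec}), and (ii)$\Rightarrow$(i) by invoking the conditional Strassen result (Theorem~\ref{condcoupling}) with $X=m_t(\Theta_t)$, $Y=\Theta_{t+1}$, $Z=Z_{\le t}$ and turning the resulting coupling into an evolution law depending on $\theta_t$ only through $m_t(\theta_t)$, which is precisely the paper's extension of the coupling to $\Theta_t^\ast$. The additional Ionescu--Tulcea/Kolmogorov scaffolding for assembling the process over all of $\mathbb{Z}$ is extra bookkeeping the paper leaves implicit, not a different argument.
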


Theorem~\ref{thm.3} does not guarantee the uniqueness of 
the lift from an M-LSSM to an LSSM. 
However, the next Corollary~\ref{thm.cov.uniq} shows that the covariance structure induced by such a lift is unique.
Consequently, in applications where the latent dependence is essential 
(e.g., heterogeneity), weak-stationarity statements for 
$(\Theta_t)_{t\in\mathbb Z}$ become well posed. 
In particular, one may analyze weak stationarity using any LSSM lift, 
without committing to a specific choice of conditional copula.
We finally note that this contrasts with the more general lift from M-SSM to SSM discussed in 
Section~\ref{sec.3.2}, where different lifts can lead to different covariances among the latent states.

\begin{corollary}[Uniqueness of the covariance structure under an LSSM lift]
\label{thm.cov.uniq}
Consider the two processes $(Z_t）_{t\in\mathbb{Z}}$ and $(\Theta_t)_{t\in\mathbb{Z}}$ under the setting of the M-LSSM of Model~\ref{mod.2} satisfying \eqref{eq.e231} with sequences $(\Delta_t)_{t\in \mathbb{Z}}$ and $(c_t)_{t\in\mathbb{Z}}$, and assume
\[
\Var{\Theta_t}<\infty,\qquad t\in\mathbb{Z}.
\]
Assume that condition {\it ii}.~of Theorem~\ref{thm.3} holds with the same choice of $(\Delta_t)_{t\in \mathbb{Z}}$ and $(c_t)_{t\in\mathbb{Z}}$, so that existence of the corresponding LSSM lift is guaranteed by Theorem \ref{thm.3}. Then, the covariance structure of the latent states $(\Theta_t)_{t\in\mathbb{Z}}$
does not depend on the particular choice of the LSSM lift, i.e., for any LSSM process
  $(Z_t^\ast,\Theta_t^\ast)_{t\in\mathbb{Z}}$ satisfying condition~{\it i}., the covariance structure is uniquely determined as
  \begin{equation}\label{eq.cov.uniq.thm}
  \begin{aligned} \cov{\Theta_t^\ast,\Theta_{t+k}^\ast}\;=\;\Biggl(\prod_{j=0}^{k-1}\Delta_{t+j}\Biggr)\,\Var{\Theta_t},
  \end{aligned}
  \end{equation}
  for all $t\in\mathbb{Z}$ and $k\ge 1$.

\end{corollary}

\begin{proof}

By condition {\it i}.~of Theorem~\ref{thm.3}, the tower property shows that the processes $(Z_s^\ast)_{s\le t}$ and $(Z_s)_{s\le t}$ satisfy
\[
\E{h(\Theta_t)}
=
\E{\E{h(\Theta_t)\mid Z_{\le t}}}
=
\E{\E{h(\Theta_t^\ast)\mid Z_{\le t}^\ast}}
=
\E{h(\Theta_t^\ast)},
\]
for any $h$ such that $\E{h(\Theta_t)}<\infty$. Specifically, since this holds for $h(x)=x^k$ with $k=1, 2$, we have
\begin{equation}\label{eq.var.match}
\Var{\Theta_t^\ast}=\Var{\Theta_t}.
\end{equation}
Next, since $(Z_t^\ast,\Theta_t^\ast)_{t\in\mathbb{Z}}$ is an LSSM lift with conditional mean
\[
\E{\Theta_{t+1}^\ast\mid \Theta_t^\ast, Z_{\le t}^\ast}=\Delta_t\Theta_t^\ast+c_t,
\]
Lemma~\ref{cov.structure} thus yields, for any $k\ge 1$,
\[
\cov{\Theta_t^\ast,\Theta_{t+k}^\ast}
=
\Var{\Theta_t^\ast}\,\prod_{j=0}^{k-1}\Delta_{t+j}.
\]
Combining this identity with \eqref{eq.var.match} proves \eqref{eq.cov.uniq.thm}.
\end{proof}

\subsection{Existence of an LSSM representation for the NB-INGARCH}

With Theorem \ref{thm.3} established, we are finally ready to lift the M-LSSM of Example \ref{ex.2}, equivalently the heterogeneous NB-INGARCH model in Model \ref{mod.3}, to its analogous LSSM version. Note that the proof of the necessary part of the following Theorem \ref{thm.gam} provides one of such explicit constructions of the desired LSSMs.

\begin{theorem}
\label{thm.gam}
Under the setting of the M-LSSM in Example~\ref{ex.2} with exogenous sequences in \eqref{exo.1}, we have the
conditional convex-order relation
\begin{equation}\label{eq.a1}
   \left[\Delta_t \Theta_t + c_t\bigm|Z_{\le t}\right]
      \;\preceq_{\mathrm{cx}}\;
   \left[\Theta_{t+1}\bigm|Z_{\le t}\right],
\end{equation}
\emph{if and only if}
\begin{equation}\label{eq:cond_general}
   0~\le~ \Delta_t\frac{b_{t+1\mid t}}{b_t}~\le~ 1.
\end{equation}
\end{theorem}

\begin{proof}
First, we have
\[
   \mathbb{E}[\Delta_t \Theta_t + c_t\mid Z_{\le t}]
      =\Delta_t\,\frac{a_t}{b_t}+c_t,
   \qquad
   \mathbb{E}[\Theta_{t+1}\mid Z_{\le t}]
      =\frac{a_{t+1\mid t}}{b_{t+1\mid t}},
\]
and
\begin{equation}\label{eq.a2}
\frac{a_{t+1\mid t}}{b_{t+1\mid t}}
      =\Delta_t\,\frac{a_t}{b_t}+c_t.
\end{equation}
Set
\[
   X=\Delta_t\Theta_t+c_t,
   \qquad
   Y=\Theta_{t+1},
   \qquad
   q_t=\frac{b_{t+1\mid t}}{b_t}>0.
\]
We first prove the sufficient part. With the mean restriction \eqref{eq.a2}, the conditional moment generating functions are
\[
   \E{e^{sX}\mid Z_{\le t}}=e^{c_t s}\Bigl(1-\frac{\Delta_t s}{b_t}\Bigr)^{-a_t}
   \qquad \text{ and }\qquad
   \E{e^{sY}\mid Z_{\le t}}=\Bigl(1-\frac{s}{q_t b_t}\Bigr)^{-q_t(\Delta_t a_t+c_t b_t)}.
\]
Let $D(s; Z_{\le t})=\log \E{e^{sY}\mid Z_{\le t}}-\log \E{e^{sX}\mid Z_{\le t}}$. Then,
\[
\begin{aligned}
  D(s; Z_{\le t})=a_t \log \left(1-\frac{\Delta_t s}{b_t}\right) -  c_t s - q_t (\Delta_t a_t + c_t b_t) \log \left(1-\frac{s}{q_t b_t}\right).
     \end{aligned}
\]
Suppose, contrary to the assumption, that $q_t>1/\Delta_t$. Then, as $s \to (b_t/\Delta_t)^+$,
\begin{align*}
    D(s; Z_{\leq t})&\to a_t \log 0^- -c_t \frac{b_t}{\Delta_t}-q_t (\Delta_t a_t + c_t b_t)\log \left(1-\frac{1}{\Delta_t q_t}\right)=-\infty.
\end{align*}
Hence, for $s_{\epsilon} = b_t/\Delta_t - \epsilon$, with $\epsilon>0$ sufficiently small, we must have $D(s_{\epsilon}; Z_{\leq t})<0$. Since we can find such an $\epsilon>0$ for any realization of $Z_{\leq t}$, the sufficiency is proved.

Now, we prove the necessary part. First, note that
\begin{align*}
\Delta_t \Theta_t|Z_{\leq t}\sim {\rm Gamma}(a_t, b_t/\Delta_t).
\end{align*}
Let $X^{\prime}=\Delta_t\Theta_t + V_t$, where $V_t|\Theta_{\leq t}, Z_{\leq t}\sim {\rm Gamma}(c_t b_t/\Delta_t, b_t/\Delta_t)$. By construction, we have $\Delta_t \Theta_t \perp V_t|Z_{\leq t}$, hence $X^{\prime}\sim {\rm Gamma}(a_t + c_t b_t /\Delta_t, b_t/\Delta_t)$. Next, note that by the mean restriction \eqref{eq.a2}, and by the assumption $\Delta_t b_{t+1|t}/b_t \leq 1$,
\begin{align*}
a_{t+1|t}&= \frac{\Delta_t b_{t+1|t}}{b_t}(a_t + c_t b_t/\Delta_t)\leq a_t + c_t b_t/\Delta_t.
\end{align*}
Hence, let $Q_t|\Theta_t, V_t, Z_{\leq t}\sim {\rm Beta}(a_{t+1|t}, a_t + c_t b_t/\Delta_t - a_{t+1|t})$ (with a ${\rm Beta}(k, 0)$ with $k>0$ understood to be a degenerate distribution at 1), and set
\begin{align*}
Y^*&= \frac{(b_t/\Delta_t)Q_t X^{\prime}}{b_{t+1|t}}.
\end{align*}
By a Beta-thinning, $Q_t X^{\prime}|Z_{\leq t}\sim {\rm Gamma}(a_{t+1|t}, b_t/\Delta_t)$, such that $Y^*|Z_{\leq t}\sim {\rm Gamma}(a_{t+1|t}, b_{t+1|t})\stackrel{(d)}{=}Y|Z_{\leq t}$. We have
\begin{align*}
\E{Y^*|X, Z_{\leq t}}= \E{Y^*|\Theta_t, Z_{\leq t}}
=\frac{(b_t/\Delta_t)\E{Q_t|Z_{\leq t}}}{b_{t+1|t}}\, (\Delta_t \Theta_t + \E{V_t|Z_{\leq t}})= \Delta_t \Theta_t + c_t = X,
\end{align*}
hence by the result of \citet{leskela2017condconvex} (see Theorem \ref{condcoupling} in \ref{app.2}), we have $\left[X|Z_{\leq t}\right]\preceq_{cx}\left[Y^*|Z_{\leq t}\right]$, which also implies $\left[X|Z_{\leq t}\right]\preceq_{cx}\left[Y|Z_{\leq t}\right]$. This completes the proof of the necessary part.
\end{proof}

\medskip

Finally, applying Theorem~\ref{thm.gam} to the specification in Example~\ref{ex.3} yields the proof of Theorem~\ref{thm.1}.

\begin{proof}[Proof of Theorem \ref{thm.1}]
From Theorem~\ref{thm.gam}, the lift required in Theorem \ref{thm.1} exists whenever $0\le \Delta_{t}q_t\le 1$ for all $t\ge 1$; by Lemma \ref{lem.1} in \ref{app}, this is satisfied whenever $0< \Delta_t \leq 1$ for all $t \geq 1$, and then Corollary \ref{thm.cov.uniq} concludes the proof. 
\end{proof}

\section{Practical application of the heterogeneous NB-INGARCH model}
\label{sec.6}

This section illustrates the likelihood-based estimation of the proposed
heterogeneous NB-INGARCH model in a property and casualty insurance
ratemaking application.

\subsection{Illustration of the model estimation procedure}

We use the heterogeneous NB-INGARCH model in Model~\ref{mod.4}, or
equivalently its Poisson-Gamma M-LSSM representation in Example~\ref{ex.3}.
Following Theorem~\ref{thm.1}, we impose
\[
\Delta_t \equiv \Delta \in (0,1],
\]
so that the model admits an LSSM lift whose latent state process is
weakly stationary.

Model~\ref{mod.4} is formulated for a single policyholder. To apply it to
panel data, we introduce an additional index \(i=1,\ldots,n\) for
policyholders and assume that the corresponding processes are independent
across \(i\). For \(i=1,\ldots,n\) and \(t=1,\ldots,T\), we specify the
a priori rate as
\[
\lambda_{i,t}
=
\exp\left(\mathbf{x}_{i,t}^{\top}\boldsymbol w\right),
\]
where \(\boldsymbol w\in\Real^{p+1}\) is the vector of regression
coefficients and \(\mathbf{x}_{i,t}\in\Real^{p+1}\) is the covariate vector
for policyholder \(i\) at time \(t\), including an intercept term.

Under this panel specification, the exogenous quantities in \eqref{exo.2}
are parameterized by
\[
(\boldsymbol w,\Delta,a_{1\mid 0})
\in
\Real^{p+1}\times(0,1]\times\Real_{>0}.
\]
These parameters are estimated from the training data
\[
\left\{(z_{i,t},\mathbf{x}_{i,t}): i=1,\ldots,n,\ t=1,\ldots,T\right\},
\]
where \(z_{i,t}\) denotes the realization of \(Z_{i,t}\).

Let \(\mathrm{dNB}(\cdot;\kappa,\pi)\) denote the negative-binomial
probability mass function with parameters \(\kappa\) and \(\pi\). The
log-likelihood is then given by
\begin{equation}
\begin{aligned}
\ell(\boldsymbol w,\Delta,a_{1\mid 0})
&=
\sum_{i=1}^n \sum_{t=1}^{T}
\log f\left(z_{i,t}\mid z_{i,1:(t-1)}\right)  \\
&=
\sum_{i=1}^n \sum_{t=1}^{T}
\log\left\{
\mathrm{dNB}\left(
z_{i,t};
\kappa_{i,t\mid t-1},
\pi_{i,t\mid t-1}
\right)
\right\},
\end{aligned}
\end{equation}
where \(\kappa_{i,t\mid t-1}\) and \(\pi_{i,t\mid t-1}\) are computed
recursively as in Model~\ref{mod.4}. Parameter estimation is carried out by numerically maximizing this
log-likelihood using standard optimization routines in {\tt R} \citep{R2025}, for example
the {\tt optim} function.

In the next subsection, we  
demonstrate a natural, closed-form treatment of missing data under the framework of the Poisson-Gamma M-LSSM in Example \ref{ex.3}, and hence equivalently under the framework of the heterogeneous NB-INGARCH model in Model \ref{mod.4}.

\subsection{Handling missing data}

The framework in Model \ref{mod.4} allows \(\lambda_{i,t}\) to take the value zero. This can potentially lead to non-identifiability of the INGARCH coefficients, see Proposition \ref{prop.2}, and therefore favors the M-LSSM formulation over the INGARCH formulation.
In particular, allowing \(\lambda_{i,t}=0\) provides a principled and computationally convenient way to handle missing data by treating such time points as having zero exposure. Missing observations whose a priori rates are non-zero remain in the marginal likelihood of the observations \(Z_{i,t}\), thereby requiring integration for likelihood evaluation. We can avoid this complication by instead setting \(\lambda_{i, t^*}=0\) for any \(t^*\in\mathbb{Z}\) such that \(Z_{i,t^*}\) is unobserved. Since this makes \(Z_{i,t^*}\) degenerate at \(0\), we obtain
\[
a_{i,t^*}=a_{i,t^*\mid t^*-1}
\qquad\text{and}\qquad
b_{i,t^*}=b_{i,t^*\mid t^*-1},
\]
that is, the filtering step is skipped at any such $t^*$. 
Thus for such $t^*$, instead of modeling the uncertainty of the missing data through the unknown realization of \(Z_{i,t^*}\), this treatment carries that uncertainty forward only through the evolution step, without adding new information through filtering.

In line with such a treatment of missing observations, we deal with 
missing observations through an indicator \(R_{i,t}\), where
\(R_{i,t}=0\) means that \(z_{i,t}\) is observed and \(R_{i,t}=1\) means that
it is missing. For a missing observation, we set
\[
\lambda_{i,t}=0
\qquad\text{and}\qquad
z_{i,t}=0.
\]
Equivalently, under such a specification in case of missing data \(R_{i,t}=1\), the negative-binomial distribution is then
degenerate at zero, the corresponding likelihood contribution is equal to one, and
\[
\log\left\{
\mathrm{dNB}\left(
z_{i,t};
\kappa_{i,t\mid t-1},
\pi_{i,t\mid t-1}
\right)
\right\}
=0.
\]
Thus, missing observations do not contribute directly to the log-likelihood,
while their effect is propagated through the recursive evolution of the model.
This is a clear advantage of the state-space formulation of the NB-INGARCH model over its original definition.

\subsection{Model performance comparison using real property and casualty insurance data}

We measure the performance of the proposed heterogeneous
NB-INGARCH model using a real insurance dataset. The data is from the
Wisconsin Local Government Property Insurance Fund (LGPIF), described in
\citet{frees2016multivariate}. The dataset contains observed inland marine
(IM) claims from 1,234 local government entities over the years 2006--2010,
which we index by \(t=1,\ldots,5\). Summary statistics for this dataset can
be found in \citet{frees2016multivariate} and \citet{youn2023simple}.

Entities with zero coverage are excluded, leaving 1,016 entities for the
empirical analysis. We include two categorical covariates. The first is
\emph{entity type}, with six levels: miscellaneous, city, county, school,
town, and village. The second is \emph{coverage level}, which is divided into
three intervals:
\[
(0,0.135], \qquad (0.135,0.495], \qquad (0.495,\infty).
\]
In addition, data from \(n_{\rm test}=894\) local government entities in 2011
(indexed by \(t=6\)) are used as a hold-out sample for testing.

For comparison, we consider two benchmark models:
\begin{itemize}
  \item The stationary Poisson-AR(1) state-space model \citep{pinquet2001allowance, ahn2021ordering}, whose latent state process is strictly stationary.
  Here, strict stationarity means that all finite-dimensional distributions of
  the latent state process are invariant under time shifts, which is stronger
  than weak stationarity, where only the first two moments and the covariance
  structure are required to be time invariant.
  
  \item The Poisson-Gamma random-effects model
  \citep{dionne1989generalization,boucher2008models}. This model is nested in
  the heterogeneous NB-INGARCH model in Model~\ref{mod.4} and is obtained by
  imposing the restriction \(\Delta_t\equiv 1\).
\end{itemize}

Table~\ref{tab.1} reports the out-of-sample test MSE and the out-of-sample
test log-likelihood for period \(t=6\). The latter is defined as
\[
\sum_{i=1}^{n_{\rm test}}
\log f\left(z_{i,6}\mid z_{i,1:5}\right).
\]
The results are reported for the proposed heterogeneous NB-INGARCH model and
the benchmark models.

%
%

\begin{table}[t]
\centering
\begin{tabular}{lcc}
  \toprule
  & \multicolumn{2}{c}{\bf out-of-sample}\\
\textbf{Model} & \textbf{MSE} & \textbf{log-likelihood} \\
\midrule
heterogeneous NB–INGARCH
  & 0.193
  & -176.94 \\
Poisson-AR(1) state-space model
  & 0.184
  & -177.90 \\
Poisson–Gamma random-effects
  & 0.197
  & -179.02 \\
\bottomrule
\end{tabular}
\caption{Out-of-sample test MSE and Out-of-sample test log-likelihood at $t=6$ (year 2011).}
\label{tab.1}

\end{table}

Table~\ref{tab.1} shows that the three models have broadly comparable
out-of-sample predictive performance, although the best-performing model
depends on the evaluation criterion. The proposed heterogeneous NB-INGARCH
model attains the highest test log-likelihood, whereas the stationary
Poisson-AR(1) state-space model gives the smallest test MSE. 

From a computational perspective, however, the NB-INGARCH model has an
important advantage over the Poisson-AR(1) state-space model. Its conditional
predictive distribution is available in closed form as a negative-binomial
distribution, so likelihood evaluation does not require numerical integration
or filtering over the latent state. This leads to a simpler and faster
estimation procedure. 
The trade-off is that the Poisson-AR(1) state-space model assumes strict
stationarity of the latent state process, whereas the proposed NB-INGARCH model
establishes weak stationarity of the latent states through its equivalent
state-space representation.

The Poisson-Gamma random-effects model, which corresponds to the restriction
\(\Delta_t\equiv 1\), performs slightly worse than the other two models under
both criteria.
For the heterogeneous NB-INGARCH model, the estimated value
\(\Delta_t \equiv \widehat{\Delta}\approx 0.925\) is large, but less than one. Thus, the fitted
model is not very different from the Poisson-Gamma random-effects model, while still allowing
a modest degree of time-order dependence in the claims history. This suggests
that, for this dataset, there is evidence for dynamic updating beyond a pure
random-effects structure.

\section{Conclusion}\label{sec.7}

This paper bridges INGARCH models and state-space models (SSMs) by showing how the former can be represented within the latter framework and by deriving explicit SSM formulations for the NB-INGARCH cases. The state-space representation improves interpretability, allows covariates to be incorporated under a weakly stationary latent state framework, and provides a convenient treatment of missing observations within the INGARCH modeling paradigm.

Future research may extend this framework to broader classes of INGARCH models. A systematic investigation of weak stationarity within these extensions, especially under time-varying covariates and panel data settings, would further strengthen the practical relevance of the proposed framework in insurance applications.

\bigskip

\section*{Data Availability}
The data and code used in this study are available at:\\
\url{ https://github.com/jaeyoun-ahn/INGARCH}

\section*{Acknowledgements}

Jae Youn Ahn was supported by the National Research Foundation of Korea (NRF) through a grant funded by the Korea government (MSIT) (RS-2025-23524530) and the Basic Science Research Program (Priority Research Institute) funded by the Ministry of Education (2021R1A6A1A10039823).
Parts of this manuscript were written while Mario V.~W\"uthrich was hosted at Ewha Womans University, Seoul.

\section*{Conflicts of interest}

The authors declare no conflicts of interest regarding this manuscript.

\bibliographystyle{apalike}
\bibliography{bib_tex}

\appendix

\section{Miscellaneous results}\label{app}

\begin{lemma}
\label{lem.1}
Under the setting of Example~\ref{ex.3},  we have
\begin{equation}\label{eq:uniform-cond}
\Delta_{t}\,\frac{b_{t+1\mid t}}{b_{t}}\ \le\ 1,
\end{equation}
for every $t\ge 1$.
\end{lemma}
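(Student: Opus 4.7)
The plan is to reduce the inequality to a purely deterministic statement about the sequence $b_{i,t}$ and then establish it by induction. First, I would observe that the Gamma--Poisson filtering update from Example~\ref{ex.13_} gives $b_{i,t}=b_{i,t|t-1}+w_{i,t}\lambda_{i,t}$, while the evolution equation states exactly $b_{i,t+1|t}=q_{i,t}(b_{i,t|t-1}+w_{i,t}\lambda_{i,t})=q_{i,t}\,b_{i,t}$. Consequently, the ratio $b_{i,t+1|t}/b_{i,t}$ equals $q_{i,t}$ and does not depend on the realization of the responses, reducing \eqref{eq:uniform-cond} to the deterministic inequality $\Delta_{i,t}\,q_{i,t}\le 1$.

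Next, I would carry out a short algebraic reformulation. Writing $u_t:=b_{i,t}/a_{i,1|0}$ and $\delta:=\Delta_{i,t}\in[0,1]$, the target inequality becomes
\[
\delta \le \delta^{2}+(1-\delta^{2})\,u_t,
\]
which is trivial for $\delta\in\{0,1\}$ and, for $\delta\in(0,1)$, reduces (dividing by $1-\delta$) to $u_t\ge \delta/(1+\delta)$. Since $\delta/(1+\delta)\le 1/2<1$, it is therefore sufficient to establish the stronger uniform bound
\[
b_{i,t}\ \ge\ a_{i,1|0}\qquad\text{for every } t\ge 1.
\]

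I would then prove this bound by induction on $t$. For the base case $t=1$, the anchoring $b_{i,1|0}=a_{i,1|0}$ gives $b_{i,1}=a_{i,1|0}+w_{i,1}\lambda_{i,1}\ge a_{i,1|0}$. For the inductive step, assuming $b_{i,t}\ge a_{i,1|0}$, i.e.\ $u_t\ge 1$, I would analyze the auxiliary function $f(u):=u/(\delta^{2}+(1-\delta^{2})u)$, which by a one-line derivative computation is nondecreasing on $(0,\infty)$ with $f(1)=1$. Applying this with $u=u_t\ge 1$ gives $q_{i,t}\,b_{i,t}=a_{i,1|0}\,f(u_t)\ge a_{i,1|0}$, hence $b_{i,t+1|t}\ge a_{i,1|0}$, and finally $b_{i,t+1}=b_{i,t+1|t}+w_{i,t+1}\lambda_{i,t+1}\ge a_{i,1|0}$, closing the induction.

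The argument is essentially routine once the deterministic structure of $b_{i,t}$ is isolated; the only potential pitfall is handling the boundary cases $\Delta_{i,t}\in\{0,1\}$ separately (the reformulation $u_t\ge\delta/(1+\delta)$ requires $\delta<1$), but both are immediate. Thus no genuine obstacle arises, and the two ingredients --- monotonicity of $f$ and the telescoping identity $b_{i,t+1|t}=q_{i,t}b_{i,t}$ --- do all the work.
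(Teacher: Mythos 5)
Your proof is correct and follows essentially the same route as the paper: both arguments reduce the claim to the deterministic bound $b_{i,t}\ge a_{i,1|0}$ (equivalently $q_{i,t}\le 1$ up to the factor $\Delta_{i,t}\le 1$) and establish it by induction on $t$ using the explicit form of $q_{i,t}$. Your use of the monotone auxiliary function $f(u)=u/(\delta^{2}+(1-\delta^{2})u)$ is just a repackaging of the paper's rewriting $q_{i,t}\,b_{i,t}/a_{i,1|0}=\bigl(\Delta_{i,t}^{2}(b_{i,t}/a_{i,1|0})^{-1}+(1-\Delta_{i,t}^{2})\bigr)^{-1}\ge 1$, so no substantive difference remains.
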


\begin{proof}
Since $\Delta_{t}\leq 1$, it suffices to show that $b_{t+1|t}/b_{t}=q_{t}\leq 1$. In turn, it suffices to show that $b_{t}/a_{1|0}\geq 1$ for all $t$, since if this holds,
\begin{align*}
q_{t}^{-1} &= \Delta^2_{t}+(1-\Delta^2_{ t})\frac{b_{t}}{a_{1|0}}\geq \Delta^2_{t}+(1-\Delta^2_{t})=1.
\end{align*}
We prove this via induction as follows.  First, this holds for $t=1$ since $a_{1|0}=b_{1|0}\leq b_{1|0}+\lambda_{1}=b_{1}$. 
Next, given that $b_{t}/a_{1|0}\geq 1$ for some $t$, 
\begin{align*}
\frac{b_{t+1}}{a_{1|0}}&= q_{t} \frac{b_{t}}{a_{1|0}} + \lambda_{t+1}\\
&\geq \frac{b_{t}/a_{1|0}}{\Delta^2_{t}+(1-\Delta^2_{ t})(b_{t}/a_{1|0})}\\
&= \frac{1}{\Delta^2_{t}(b_{t}/a_{1|0})^{-1}+(1-\Delta^2_{t})}\geq 1.
\end{align*}
This completes the proof.
\end{proof}

\section{Proof of Theorem \ref{thm.3}}\label{app.2}

We first provide the result that a necessary requirement is the conditional convex ordering.

\begin{lemma}\label{cor:varNec}
Consider the stochastic process of observed responses $(Z_t)_{t\in\mathbb{Z}}$ and latent states $(\Theta_t)_{t\in\mathbb{Z}}$ in the M-SSM of Model~\ref{mod.2} satisfying the conditional mean condition \eqref{eq.e231} for given sequences $(\Delta_t)_{t\geq 0}$ and $(c_t)_{t\geq 0}$. 
Assume there exists an SSM (Model~\ref{mod.100}) with processes $(Z_t^*)_{t\in\mathbb{Z}}$ and $(\Theta_t^*)_{t\in\mathbb{Z}}$ such that
\[
 \Theta_{t+1}^* \mid Z_{\le t}^*
  \;\eqd\;
  \Theta_{t+1} \mid Z_{\le t},
  \qquad \text{a.s.},
\]
and
\begin{equation}\label{eqq.1}
  \E{\Theta_{t+1}^* \mid \Theta_t^*, Z_{\le t}^*} \;=\; \Delta_t\Theta_t^*+c_t,
  \qquad \text{a.s.}
\end{equation}
Then,
\[
        \left[\Delta_t\Theta_t^*+c_t \mid Z_{\le t}^*\right]
\;\preceq_{\mathrm{cx}}\;
\left[\Theta_{t+1}^*\mid Z_{\le t}^*\right], \qquad \text{a.s.}
\]
\end{lemma}

\begin{proof}
Let $m_t(x) = \Delta_t x + c_t$. Firstly, by \eqref{eqq.1} and the tower property, we have
\[
\E{m_t(\Theta_t^*) \mid Z_{\le t}^*} = \E{\Theta_{t+1}^*\mid Z_{\le t}^*}, \qquad \text{a.s.}
\]
It hence remains to show that $\E{\phi( m_t(\Theta_t^*))|Z^*_{\le t}}
\leq \E{\phi(\Theta_{t+1}^*)|Z^*_{\le t}}$ for any convex function $\phi$. By Jensen's inequality,
\[
 \phi( m_t(\Theta_t^*)) = \phi\left(\E{\Theta_{t+1}^*\mid\Theta_t^*, Z_{\le t}^*}\right)\leq \E{\phi(\Theta_{t+1}^*)|\Theta_t^*, Z^*_{\leq t}}, \qquad \text{a.s.}
 \]
 From this the claim follows.
\end{proof}

The proof of the other direction hence, the proof of Theorem \ref{thm.3}, is given as follows.

\begin{proof}[Proof of Theorem \ref{thm.3}]
  \emph{i$\,\Rightarrow\,$ii}: This immediately follows from Lemma \ref{cor:varNec}.

  Now we prove \emph{i\,$\Leftarrow$\,ii}: Let $m_t(x) = \Delta_t \Theta_t + c_t$, and let $X_t := m_t(\Theta_t)$. The proof is trivial for $\Delta_t = 0$, hence we focus on $\Delta_t \neq 0$. 
   From Theorem \ref{condcoupling}, we have the coupling $(X_t^*, \Theta_{t+1}^*, Z_{\le t}^*)$ such that
  \begin{equation}\label{ap.1}
  (X_t, Z_{\le t}) \eqd(X_t^*, Z_{\le t}^*)
  \quad\hbox{and}\quad
  (\Theta_{t+1}, Z_{\le t}) \eqd(\Theta_{t+1}^*, Z_{\le t}^*), \qquad \text{a.s.},
  \end{equation}
  satisfying the condition
  \begin{equation}\label{ap.2}
  \E{\Theta_{t+1}^* \mid X_t^*, Z_{\le t}^*}= X_t^*, \qquad \text{a.s.}
  \end{equation}
  Observe that \eqref{ap.1} implies
  \[
  X_t\mid  Z_{\le t}=z_{\le t} \eqd X_t^*\mid Z_{\le t}^*=z_{\le t}
  \quad\hbox{and}\quad
  \Theta_{t+1}\mid Z_{\le t}=z_{\le t}  \eqd \Theta_{t+1}^*\mid Z_{\le t}^*=z_{\le t}.
  \]
It remains to extend the coupling to a suitable $\Theta^*_t$. Towards this end, note that since $\Delta_t\neq 0$, $m_t(\cdot)$ is one-to-one, with $\Theta_t = m_t^{-1}(X_t) = (X_t - c_t)/\Delta_t$. Hence define $\Theta_t^* = (X_t^* - c_t)/\Delta_t$. This definition gives $\Theta^*_t|Z^*_{\leq t}=z_{\leq t}\eqd \Theta_t|Z_{\leq t}=z_{\leq t}$. Moreover, by using the fact that $X_t^*=m_t(\Theta^*_t)$,
\begin{align*}
\E{\Theta^*_{t+1}|\Theta^*_t, Z^*_{\leq t}}&=\E{\Theta^*_{t+1}|X^*_t, Z_{\leq t}} = X^*_t=m_t(\Theta^*_t).
\end{align*}
This completes the proof.
\end{proof}

\begin{theorem}[(i) of Theorem 1.4 of \citet{leskela2017condconvex}]\label{condcoupling} For any real-valued random vectors $X$ and $Y$ with finite first moments and any random element $Z$, $\left[X|Z\right] \preceq_{\mathrm{cx}}\left[Y|Z\right]$ if and only if there exists a coupling $(X^*, Y^*, Z^*)$ such that $(X, Z)\stackrel{(d)}{=}(X^*, Z^*)$, $(Y, Z)\stackrel{(d)}{=}(Y^*, Z^*)$ and $\E{Y^*|X^*, Z^*}=X^*$, a.s.
\end{theorem}

\end{document}